\newtheorem{theorem}{Theorem}[section]
\newtheorem{lemma}[theorem]{Lemma}
\newtheorem{corollary}[theorem]{Corollary}
\newtheorem{proposition}[theorem]{Proposition}
\newtheorem{notationdefinition}[theorem]{Notation and definition}
\theoremstyle{definition}
\newtheorem{definition}[theorem]{Definition}
\newtheorem{example}[theorem]{Example}
\newtheorem{notation}[theorem]{Notation}
\theoremstyle{remark}
\newtheorem{remark}[theorem]{Remark}
\definecolor{DarkBlue}{rgb}{0,0.1,0.55}
\newcommand{\bfdef}[1]{{\bf \color{DarkBlue}  \emph{#1}}}
\numberwithin{equation}{section}
\newcommand {\hide}[1]{}
\newcommand {\junk}[1]{}
\newcommand {\R} {{\rm R}}
\newcommand {\K}     {\mbox{\rm K}}
\newcommand {\Sphere}{\mbox{${\bf S}$}}     
\newcommand {\Ball}{\mbox{${\bf B}$}}     
\newcommand {\Z}  {\mathbb{Z}}
 \newcommand {\N}         {\mathbb{N}}
\newcommand{\F}{\mathbb{F}}
\newcommand {\RR} {{\mathcal R}}
\newcommand {\V} {\mathbf{V}}
\def\addots{\mathinner{\mkern1mu
\raise1pt\vbox{\kern7pt\hbox{.}}
\mkern2mu\raise4pt\hbox{.}\mkern2mu
\raise7pt\hbox{.}\mkern1mu}}
\newcommand{\HH}  {\mbox{\rm H}}
\newcommand{\defeq}{\;{\stackrel{\text{\tiny def}}{=}}\;}
\newcommand{\dlim}{\mathop{\lim\limits_{\longrightarrow}}}
\newcommand{\x}{\mathbf{x}}
\newcommand{\X}{\mathbf{X}}
\newcommand{\y}{\mathbf{y}}
\newcommand{\Y}{\mathbf{Y}}
\newcommand{\z}{\mathbf{z}}
\newcommand{\ZB}{\mathbf{Z}}
\newcommand{\tb}{\mathbf{t}}
\newcommand{\TB}{\mathbf{T}}
\begin{document}
\title[]
{Polynomial hierarchy, Betti numbers and a 
real analogue of Toda's theorem 
}
\author{Saugata Basu}
\address{Department of Mathematics,
Purdue University, West Lafayette, IN 47906, U.S.A.}
\email{sbasu@math.purdue.edu}
\author{Thierry Zell}
\address{School of Mathematics and Computing Sciences,
Lenoir-Rhyne University, Hickory, NC  28603
}
\email{thierry.zell@lr.edu}

\thanks{Communicated by Peter Buergisser}

\thanks{The first author was supported in part by an NSF 
grant CCF-0634907. 
} 

\subjclass{Primary 14P10, 14P25; Secondary 68W30}
\date{\textbf{\today}}

\keywords{Polynomial hierarchy, Betti numbers, Semi-algebraic sets, Toda's 
theorem}

\begin{abstract}
Toda \cite{Toda} proved in 1989 that the (discrete) polynomial time hierarchy,
$\mathbf{PH}$, 
is contained in the class $\mathbf{P}^{\#\mathbf{P}}$,
namely the class of languages that can be 
decided by a Turing machine in polynomial time given access to an
oracle with the power to compute a function in the counting complexity
class $\#\mathbf{P}$. This result which illustrates the power of counting 
is considered to be a seminal result in computational complexity theory.
An analogous result in the complexity theory over the reals  
(in the sense of 
Blum-Shub-Smale real 
machines \cite{BSS89}) 
has been missing so far. In this paper we formulate and prove a real analogue
of Toda's theorem. Unlike Toda's proof in the discrete case, which relied
on sophisticated combinatorial arguments, our proof is topological in nature.
As a consequence of our techniques we 
are also able to relate the computational hardness
of two extremely well-studied problems in algorithmic semi-algebraic geometry 
-- namely the problem of deciding sentences in the first order theory of the 
reals 
with a constant number of quantifier alternations,
and that of computing Betti numbers of semi-algebraic sets.
We obtain a polynomial time reduction of the compact version of the
first problem to the second. This latter result might be of independent
interest to researchers in algorithmic semi-algebraic geometry.
\end{abstract}

\maketitle

\section{Introduction and Main Results}
\label{sec:intro}

\subsection{History and Background}
In this paper we study the relationship between the computational hardness
of two important classes of problems in algorithmic semi-algebraic geometry.
Algorithmic semi-algebraic geometry is concerned with designing efficient 
algorithms for deciding geometric as well as topological
properties of semi-algebraic sets. There is a large 
body of research in this area (see \cite{BPRbook2} for background).
If we consider the most important algorithmic problems studied in this area
(see for instance the survey article \cite{Basu_survey}),
it is possible to classify them into two broad sub-classes. The first class 
consists of the problem of quantifier elimination, 
and its special cases such as deciding a sentence in the first order 
theory of the reals, or deciding emptiness of semi-algebraic sets 
(also often called the existential theory of the reals). 
The existence of algorithms for solving these problems was
first proved by Tarski \cite{Tarski51}
and later research has aimed at designing algorithms with 
better complexities \cite{R92,Gri88,GV,BPR95}.
 
The second class of problems in algorithmic semi-algebraic geometry
that has been widely investigated consists of computing
topological invariants of semi-algebraic sets, 
such as counting the number of connected components,
computing the Euler-Poincar\'e characteristic, and more generally all the 
Betti numbers of semi-algebraic sets 
\cite{Canny93a,GV92,GR92,B99,BPR99,BPRbettione}. 
Note that the properties such as  
connectivity or the vanishing of some  Betti number of
a semi-algebraic set 
is not expressible in first-order logic, and thus the existence of 
algorithms for deciding such properties is not an immediate consequence 
of Tarski's result but usually requires some additional topological
ingredients such as semi-algebraic triangulations or Morse theory etc. 
Even though the most efficient algorithms for computing the
Betti numbers of a semi-algebraic set use algorithms for 
quantifier elimination in an essential way 
\cite{BPRbettione,Bas05-first}, 
the exact relationship between these two
classes of problems has not been clarified from the point of view of 
computational complexity and doing so is one of the motivations of this paper.

The primary 
motivation for this paper comes from 
classical (i.e., discrete) computational complexity theory. 
In classical complexity theory, there is a seminal result due to Toda
\cite{Toda}
linking the complexity of counting
with that of  deciding sentences with
a fixed number of quantifier alternations. 

More precisely, Toda's theorem gives the following inclusion 
(see \cite{Pap} for precise definitions of the complexity classes 
appearing in the theorem, see also Section \ref{subsec:counterparts} 
below for the corresponding classes over $\R$).

\begin{theorem}[Toda \cite{Toda}]
\label{the:toda}
\[
{\bf PH} \subset {\bf P}^{\#{\bf P}}.
\]
\end{theorem}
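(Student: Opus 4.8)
The plan is to interpose the class $\mathbf{BP}\cdot\oplus\mathbf{P}$ between $\mathbf{PH}$ and $\mathbf{P}^{\#\mathbf{P}}$ and to establish the two inclusions $\mathbf{PH}\subseteq\mathbf{BP}\cdot\oplus\mathbf{P}$ and $\mathbf{BP}\cdot\oplus\mathbf{P}\subseteq\mathbf{P}^{\#\mathbf{P}}$ by completely different techniques. Here $\oplus\mathbf{P}$ denotes the class of languages $L$ for which some polynomial-time nondeterministic machine $M$ satisfies ``$x\in L$ iff the number of accepting computations of $M$ on $x$ is odd,'' and $\mathbf{BP}\cdot$ is the bounded-error probabilistic operator, so $\mathbf{BP}\cdot\oplus\mathbf{P}$ is a ``randomized parity'' class.

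\textbf{Stage 1: $\mathbf{PH}\subseteq\mathbf{BP}\cdot\oplus\mathbf{P}$.} The combinatorial heart is a Valiant--Vazirani style randomized isolation lemma: random affine hashing of the witness space of an $\oplus\mathbf{P}$ predicate $R(x,y)$ leaves exactly one surviving witness with inverse-polynomial probability, which standard boosting raises to a constant, so that the parity of the number of survivors detects $\exists y\,R(x,y)$. This yields $\exists\cdot\oplus\mathbf{P}\subseteq\mathbf{BP}\cdot\oplus\mathbf{P}$, and, since $\oplus\mathbf{P}$ and $\mathbf{BP}\cdot\oplus\mathbf{P}$ are closed under complementation, also $\forall\cdot\oplus\mathbf{P}\subseteq\mathbf{BP}\cdot\oplus\mathbf{P}$. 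One then inducts on the number of quantifier alternations: given $L\in\Sigma_{k+1}$ written as $x\in L\iff\exists y\,(\langle x,y\rangle\in L')$ with $L'\in\Pi_k\subseteq\mathbf{BP}\cdot\oplus\mathbf{P}$ by the inductive hypothesis, amplify the inner probabilistic error below $2^{-(\abs{y}+2)}$ and push the leading $\exists y$ inside the majority quantifier by a union bound over the exponentially many $y$; this rewrites $L$ as $\mathrm{Maj}_r\,[\exists y\,P(x,y,r)]$ with constant error and $P\in\oplus\mathbf{P}$. Applying the isolation lemma to the inner $\exists\cdot\oplus\mathbf{P}$ and then merging the two probabilistic layers puts $L$ in $\mathbf{BP}\cdot\oplus\mathbf{P}$ (and $\Pi_{k+1}$ follows by complementation). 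Throughout one uses the standard closure of $\oplus\mathbf{P}$ under complementation, conjunction with polynomial-time predicates, the parity quantifier, and self-composition ($\oplus\mathbf{P}^{\oplus\mathbf{P}}=\oplus\mathbf{P}$).

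\textbf{Stage 2: $\mathbf{BP}\cdot\oplus\mathbf{P}\subseteq\mathbf{P}^{\#\mathbf{P}}$.} The tool is an algebraic amplification of parity that can be carried out inside $\#\mathbf{P}$. The polynomial $p(y)=3y^{4}+4y^{3}$ factors as $p(y)=y^{3}(3y+4)$ and $p(y)+1=(y+1)^{2}(3y^{2}-2y+1)$, so $y\equiv 0\ (\mathrm{mod}\ 2^{k})$ implies $p(y)\equiv 0\ (\mathrm{mod}\ 2^{2k})$ while $y\equiv -1\ (\mathrm{mod}\ 2^{k})$ implies $p(y)\equiv -1\ (\mathrm{mod}\ 2^{2k})$. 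If $f(x)$ counts the accepting paths of an $\mathbf{NP}$ machine, then, using closure of $\#\mathbf{P}$ under addition, multiplication, and multiplication by constants, the $t$-fold iterate of $p$ applied to $f$ is a $\#\mathbf{P}$ function $g$ with $g(x)\equiv -1\ (\mathrm{mod}\ 2^{2^{t}})$ if $f(x)$ is odd and $g(x)\equiv 0\ (\mathrm{mod}\ 2^{2^{t}})$ otherwise, and this is polynomial-time for $t=O(\log n)$. Now $L\in\mathbf{BP}\cdot\oplus\mathbf{P}$ is given by an inner $\oplus\mathbf{P}$ predicate $R(x,r)$, $\abs{r}=m$, with $\#\{r:R(x,r)\}>\frac{2}{3}2^{m}$ or $<\frac{1}{3}2^{m}$ according to whether $x\in L$. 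Choosing $t$ with $2^{t}>m$, letting $g(x,r)$ amplify $R$, and forming the single $\#\mathbf{P}$ function $N(x)=\sum_{r}g(x,r)$ (prepend the guess of $r$ to the nondeterministic computation), we get $N(x)\equiv -\#\{r:R(x,r)\}\ (\mathrm{mod}\ 2^{2^{t}})$; since that count lies in $[0,2^{m}]\subset[0,2^{2^{t}})$ it is recovered exactly, and comparing it with $2^{m-1}$ decides $L$ with a single $\#\mathbf{P}$ query and polynomial-time post-processing. The outer deterministic polynomial-time machine in the formulation $\mathbf{BPP}^{\oplus\mathbf{P}}$ is absorbed in the same way.

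\textbf{Main obstacle.} The delicate part is Stage 1, namely the operator calculus by which $\mathbf{BP}\cdot\oplus\mathbf{P}$ swallows an arbitrary fixed number of alternations: one must commute $\exists$ past the probabilistic operator while keeping the error controlled (the union bound over witnesses forces the error to be amplified in terms of the witness length at each level), and one must ensure the inner predicate remains genuinely in $\oplus\mathbf{P}$ after isolation, which relies on the closure of $\oplus\mathbf{P}$ under the parity quantifier and self-composition. Stage 2 is elementary by comparison; the only care needed there is that every arithmetic operation building $N(x)$ stays within the closure properties of $\#\mathbf{P}$.
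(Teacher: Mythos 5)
The paper does not prove Theorem~\ref{the:toda}: it is cited to Toda~\cite{Toda}, and the remark that follows only sketches the same two-stage decomposition you use, namely $\mathbf{PH}\subseteq\mathbf{BP}\cdot\oplus\cdot\mathbf{P}$ (via Sch\"oning and Valiant--Vazirani) followed by $\mathbf{BP}\cdot\oplus\cdot\mathbf{P}\subseteq\mathbf{P}^{\#\mathbf{P}}$. Your proposal is a correct and reasonably detailed expansion of that very outline, so it follows the same approach the paper attributes to Toda's original argument.
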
 
In other words, any language in the (discrete) polynomial hierarchy can
be decided by a Turing machine in polynomial time, given access to an
oracle with the power to compute  a function in $\#\mathbf{P}$.

\begin{remark}
The proof of Theorem \ref{the:toda} in \cite{Toda} 
is quite non-trivial.
While it is obvious that the classes $\mathbf{P},\mathbf{NP},\mathbf{coNP}$
are contained in ${\bf P}^{\#{\bf P}}$, the proof for the higher
levels of the polynomial hierarchy is quite intricate and proceeds in 
two steps: first proving that the $\mathbf{PH} \subset 
\mathbf{BP} \cdot\oplus\cdot \mathbf{P}$ (using previous  results of
Sch{\"o}ning \cite{Schoning},  and Valiant and Vazirani \cite{VV}), 
and then showing that
$\mathbf{BP} \cdot\oplus\cdot \mathbf{P} \subset {\bf P}^{\#{\bf P}}$.
Aside from the obvious question about what should be a proper analogue of
the complexity class $\# \mathbf{P}$ over the reals, 
because of the presence of the intermediate complexity class in the proof,
there seems to be no direct way of extending such a proof 
to real complexity classes in the sense of the 
Blum-Shub-Smale model of computation \cite{BSS89,Shub-Smale96}. 
The proof of the main theorem (Theorem \ref{the:main}) of this paper,
which can be seen as a real analogue of Theorem \ref{the:toda}, proceeds
along completely different lines and is mainly topological in nature.
\end{remark}

In the late eighties Blum, Shub and Smale \cite{BSS89,Shub-Smale96}
introduced the notion
of Turing 
machines over more general fields, thereby generalizing
the classical problems of computational complexity theory such as
$\mathbf{P}$ vs. $\mathbf{NP}$ to corresponding problems over arbitrary fields
(such as the real, complex, $p$-adic numbers etc.). If one considers 
languages accepted by a  Blum-Shub-Smale machine 
over a finite field one recovers the classical notions of discrete complexity
theory. Over the last two decades there has been a lot of research activity 
towards proving real as well as complex analogues of well known theorems
in discrete complexity theory. The first steps in this direction were taken by
the authors 
Blum, Shub, and Smale (henceforth B-S-S) 
themselves, when they proved the 
$\mathbf{NP}_{\R}$-completeness of the problem of deciding whether
a  real polynomial equation in many variables of degree at most four 
has a real solution
(this is the real analogue of Cook-Levin's theorem that the satisfiability
problem is $\mathbf{NP}$-complete in the discrete case), and subsequently 
through the work of several  researchers 
(Koiran, B{\"u}rgisser, Cucker, Meer to name 
a few) a well-established complexity theory over the reals as well as 
complex numbers have been built up, which mirrors closely the discrete case.

From the point of view of computational complexity theory of 
real B-S-S machines the classes 
${\bf PH}$ and ${\#{\bf P}}$ appearing in the two sides of the 
inclusion in Theorem \ref{the:toda} can be 
identified with the two broad classes of problems in algorithmic
semi-algebraic geometry discussed previously, viz. the polynomial
hierarchy with the problem of deciding sentences with a fixed number
of quantifier alternations, and the class $\#{\bf P}$ with the problem
of computing certain topological invariants of semi-algebraic sets,
namely their Betti numbers which generalize the notion of cardinality
for finite sets.  (This naive intuition is made more precise in Section~\ref{sec:sharpP}.)
It is thus quite natural to seek a real analogue of Toda's theorem.
Indeed, there has been a large body of recent research on obtaining 
appropriate real (as well as complex) analogues of results in discrete
complexity theory, especially those related to counting complexity classes
(see \cite{Meer00,BC2,Burgisser-Cucker05,Burgisser-Cucker06}).

In order to formulate
such a result it is first necessary to define precisely 
real counter-parts of the discrete 
polynomial time hierarchy ${\bf PH}$ and
the discrete complexity class $\# {\bf P}$,
and this is what we do next.

\subsection{Real counter-parts of ${\bf PH}$ and $\# {\bf P}$}
\label{subsec:counterparts}
For the rest of the paper $\R$ will denote a real closed field (there
is no essential loss in assuming that $\R = {\mathbb R}$). 
By a real 
machine we will mean a  machine in the sense of 
Blum-Shub-Smale \cite{BSS89}
over the ground field $\R$.

\medskip
\noindent
{\em Notational convention.}
Since in what follows we will be forced to deal with multiple blocks of
variables in our formulas, we follow a notational convention by
which we denote blocks of variables by bold letters with superscripts 
(e.g. $\X^i$ denotes the $i$-th block), and we use non-bold letters
with subscripts to denote single variables (e.g. $X^i_j$ denotes the
$j$-th variable in the $i$-th block).
We use $\x^i$ to denote a specific value of the block of variables $\X^i$.

\subsubsection{Real analogue of ${\bf PH}$}
We 
introduce
the polynomial hierarchy for the reals.
It  mirrors the discrete case very closely (see \cite{Stockmeyer}).

\begin{definition}[The class $\mathbf{P}_\R$]
Let $k(n)$ be any polynomial in $n$.
A sequence of semi-algebraic sets
$(T_n \subset \R^{k(n)})_{n > 0}$ 
is said to belong to the class $\mathbf{P}_\R$
if there exists a 
machine $M$ over $\R$ 
(see~\cite{BSS89} or~\cite[\S3.2]{BCSS98}),
such that for all $\x \in \R^{k(n)}$, the machine $M$ 
tests membership of $\x$ in $T_n$ 
in time bounded by a polynomial in $n$.
\end{definition}

\begin{definition}\label{df:sigma}
Let $k(n),k_1(n),\ldots,k_\omega(n)$ be polynomials in $n$.
A sequence of semi-algebraic sets
$(S_n \subset \R^{k(n)})_{n > 0}$ is said to be in 
the complexity class ${\bf \Sigma}_{\R,\omega}$, 
if for each $n > 0$ the semi-algebraic set
$S_n$ is described by a 
first order formula
\begin{equation}\label{eq:alternation}
(Q_1 \Y^{1} )  \cdots (Q_\omega \Y^{\omega} )
\phi_n(X_1,\ldots,X_{k(n)},\Y^1,\ldots,\Y^\omega),
\end{equation}
with $\phi_n$ a quantifier free formula in the first order theory of the reals,
and for each $i, 1 \leq i \leq \omega$,
$\Y^i = (Y^i_1,\ldots,Y^i_{k_i(n)})$ is a block of $k_i(n)$ variables,
$Q_i \in \{\exists,\forall\}$, with $Q_j \neq Q_{j+1}, 1 \leq j < \omega$,
$Q_1 = \exists$,
and 
the sequence of semi-algebraic sets  $(T_n \subset \R^{k(n)+ k_1(n) + \cdots + k_\omega(n)})_{n >0}$ 
defined by the quantifier-free formulas $(\phi_n)_{n>0}$ 
belongs to the class $\mathbf{P}_\R$.
\end{definition}

Similarly, the complexity class ${\bf \Pi}_{\R,\omega}$
is defined as in Definition~\ref{df:sigma}, with the exception 
that the  alternating quantifiers in~\eqref{eq:alternation} start with $Q_1=\forall$.
Since, adding an additional block of quantifiers on the outside
(with new variables) does not change the set defined by a quantified formula
we have the following inclusions:
$$ {\bf \Sigma}_{\R,\omega}\subset {\bf \Pi}_{\R,\omega+1}, 
\text{\ and\ }
{\bf \Pi}_{\R,\omega}\subset{\bf \Sigma}_{\R,\omega+1}.
$$

Note that by the above definition the class 
${\bf \Sigma}_{\R,0} = {\bf \Pi}_{\R,0}$ 
is the 
familiar class ${\bf P}_{\R}$, 
the class ${\bf \Sigma}_{\R,1} = {\bf NP}_{\R}$ and the
class ${\bf \Pi}_{\R,1} = {\bf \mbox{co-}NP}_{\R}$.

\begin{definition}[Real polynomial hierarchy] 
The real polynomial time hierarchy is defined to be the union
\[
{\bf PH}_{\R} \defeq \bigcup_{\omega \geq 0} 
({\bf \Sigma}_{\R,\omega} \cup {\bf \Pi}_{\R,\omega}) = 
\bigcup_{\omega \geq 0} {\bf \Sigma}_{\R,\omega}  = 
\bigcup_{\omega \geq 0} {\bf \Pi}_{\R,\omega}.
\]
\end{definition}

For technical reasons (see Remark~\ref{rem:restrictiontocompacts})
we need to restrict to compact semi-algebraic sets,
and for this purpose, we will now define a compact analogue of
${\bf PH}_{\R}$ that we will denote ${\bf PH}_{\R}^c$.

\begin{definition}
We call $K \subset \R^n$ a \bfdef{semi-algebraic compact} if it is
a closed and bounded semi-algebraic set. (Note that if
$\R\neq\mathbb{R}$, $K$ is not necessarily compact in the order
topology.)
\end{definition}

\begin{notation}
We denote by $\Ball^k(0,r)$ the closed ball in $\R^k$ of radius $r$ centered at the
origin. We will denote by $\Ball^k$ the closed unit ball $\Ball^k(0,1)$.
Similarly, we denote by $\Sphere^k(0,r)$ the sphere in $\R^{k+1}$ 
of radius $r$ centered at the origin, and 
by $\Sphere^k$ the unit sphere $\Sphere^k(0,1)$.
\end{notation}

We now define our compact analogue ${\bf PH}_{\R}^c$ of the real polynomial hierarchy
${\bf PH}_{\R}$. Unlike in the non-compact case, we will assume all variables
vary over certain compact semi-algebraic sets 
(namely spheres of varying dimensions).

\begin{definition}[Compact real polynomial hierarchy]
\label{def:compactpolynomialhierarchy}
Let \[
k(n),k_1(n),\ldots,k_\omega(n)\]
be polynomials in $n$.
A sequence of semi-algebraic sets
$(S_n \subset \Sphere^{k(n)})_{n > 0}$ 
is in the complexity class ${\bf \Sigma}_{\R,\omega}^c$, 
if for each $n > 0$ the semi-algebraic set
$S_n$ is described by a first order formula
\[
 (Q_1 \Y^{1} \in \Sphere^{k_1(n)})  \cdots (Q_\omega \Y^{\omega} \in 
\Sphere^{k_\omega(n)} )
\phi_n(X_0,\ldots,X_{k(n)},\Y^1,\ldots,\Y^\omega),
\]
with $\phi_n$ a quantifier-free first order formula defining a 
{\em closed} semi-algebraic subset of 
$\Sphere^{k_1(n)}\times\cdots\times
\Sphere^{k_\omega(n)}\times \Sphere^{k(n)}$
and for each $i, 1 \leq i \leq \omega$,
$\Y^i = (Y^i_0,\ldots,Y^i_{k_i})$ is a block of $k_i(n)+1$ variables,
$Q_i \in \{\exists,\forall\}$, with $Q_j \neq Q_{j+1}, 1 \leq j < \omega$,
$Q_1 = \exists$, and
the sequence of semi-algebraic sets
$(T_n \subset \Sphere^{k_1(n)}\times\cdots\times 
\Sphere^{k_\omega(n)}\times\Sphere^{k(n)})_{n > 0}$
defined by the formulas $(\phi_n)_{n >0}$ belongs to the class
$\mathbf{P}_\R$.

\begin{example}
\label{eg:compact}
The following is an example of a language in $\mathbf{\Sigma}_{\R,1}^c$ (i.e., 
of the compact version of $\mathbf{NP}_\R$).

Let $k(n) = \binom{n+4}{4}-1$ and identify $\R^{k(n)+1}$ with the
space of \emph{homogeneous} polynomials in $\R[X_0,\ldots,X_n]$ of degree
$4$. Let $S_n \subset \Sphere^{k(n)} \subset \R^{k(n)+1} $ be defined by 
\[
S_n = \{P \in  \Sphere^{k(n)} \;\mid\; \exists \x 
=(x_0:\cdots:x_n) \in \mathbb{P}_\R^n \mbox{ with }
P(\x) = 0 \};
\]
in other words  $S_n$ is the set of (normalized) real forms of degree $4$ 
which have a zero in the real projective space $\mathbb{P}^n_\R$.
Then 
\[
(S_n \subset \Sphere^{k(n)})_{n > 0} \in \mathbf{\Sigma}_{\R,1}^c,
\]
since it is easy to see that $S_n$ also admits the description:
\[
S_n = \{P \in  \Sphere^{k(n)} \;\mid\; \exists \x \in \Sphere^n \mbox{ with }
P(\x) = 0 \}.
\]

Note that it is \emph{not known} if $(S_n \subset \Sphere^{k(n)})_{n > 0}$ 
is $\mathbf{NP}_\R$-complete 
(see Remark \ref{rem:compact}),
while the non-compact version of this language, 
i.e., the language consisting of (possibly non-homogeneous) 
polynomials of degree at most four having a 
zero in $\mathbb{A}_\R^n$ (instead of $\mathbb{P}^n_\R$),  
has been shown to be $\mathbf{NP}_\R$-complete \cite{BCSS98}.
\end{example}

We define analogously the class ${\bf \Pi}_{\R,\omega}^c$, and finally 
define the \bfdef{compact real polynomial time hierarchy} 
to be the union
\[
{\bf PH}_{\R}^c \defeq 
\bigcup_{\omega \geq 0} ({\bf \Sigma}_{\R,\omega}^c \cup 
{\bf \Pi}_{\R,\omega}^c) = \bigcup_{\omega \geq 0} {\bf \Sigma}_{\R,\omega}^c =
\bigcup_{\omega \geq 0} {\bf \Pi}_{\R,\omega}^c.
\]
\end{definition}

Notice that the semi-algebraic sets belonging to any language in 
${\bf PH}_{\R}^c$ are all semi-algebraic compact 
(in fact closed semi-algebraic  subsets of
spheres). Also, note the inclusion
\[
{\bf PH}_{\R}^c \subset {\bf PH}_{\R}.
\]

\begin{remark}
\label{rem:compact}
Even though the restriction to compact semi-algebraic sets might appear
to be only a technicality at first glance, 
this is actually an important restriction.
For instance, it is a long-standing  open question in real complexity theory 
whether there exists an  ${\bf NP}_{\R}$-complete 
problem which belongs to the class ${\bf \Sigma}_{\R,1}^c$ (the compact version
of the class ${\bf NP}_{\R}$, 
see Example \ref{eg:compact}). 
(This distinction between 
compact and non-compact versions of complexity classes
does not arise in discrete complexity theory for obvious reasons.) 
It is an interesting question whether
the main theorem of this paper can be extended to 
the full class ${\bf PH}_{\R}$. For technical reasons which will become clear
later in the paper (Remark \ref{rem:restrictiontocompacts})
we are unable to achieve this presently.   
\end{remark}

\begin{remark}
The topological methods used in this paper only require the sets to be compact. Using spheres to achieve this compact situation is 
a natural choice in the context of real algebraic geometry, 
since the inclusion of the space $\R^n$ 
into its one-point compactification $\Sphere^n$ is a 
continuous semi-algebraic map that sends semi-algebraic subsets of $\R^n$ 
to their own one-point compactifications (see \cite[Definition~2.5.11]{BCR}).
\end{remark}

\subsubsection{Real Analogue of $\#{\bf P}$}
\label{sec:sharpP}
Before defining the real analogue of the class $\#{\bf P}$, let us recall
its definition in the discrete case which is well known.

 \begin{figure}[hbt]
         \centerline{
           \scalebox{0.5}{
             \input{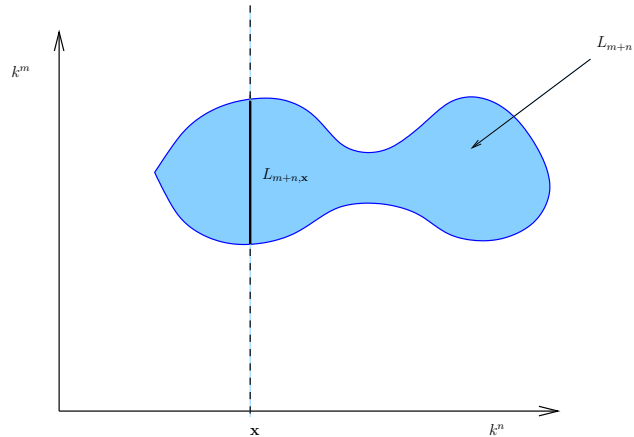}
             }
           }
         \caption{The fibers $L_{m+n,\x}$ of the language $L$}
         \label{fig:sharpP}
 \end{figure}

\begin{definition}
\label{def:sharpPdiscrete}
Let $k = \Z/2 \Z$. 
We say that a sequence of functions 
\[
(f_n:k^n \rightarrow \N)_{n > 0}
\]
is in the class $\#{\bf P}$ 
if there exists 
a language
$$
\displaylines{
L = \bigcup_{n > 0} L_n, \;L_n \subset k^n, \mbox{ with } L \in \mathbf{P}
}
$$
as well as a polynomial $m(n)$,
such that
\[
f_{n}(\x) = \mathrm{card} (L_{m+n,\x}) 
\]
for each $\x \in k^n$, where $L_{m+n,\x} = L_{m+n} \cap \pi^{-1}(\x)$ 
and $\pi:k^{m+n} \rightarrow k^n$ is the projection along the first 
$m$ co-ordinates. 
\end{definition}

In other words, $f_n$ \emph{counts} the number of points in
the fibers, $L_{m+n,\x}$,  of a language $L$ belonging to the 
(discrete) complexity class $\mathbf{P}$ (see Figure \ref{fig:sharpP}). 
(The geometric language used above might look unnecessary  
but it is very helpful towards obtaining the right 
analogue in the real case.)

In order to define real analogues of counting complexity classes of discrete
complexity theory, it is necessary to identify the proper notion of 
``counting'' in the context of semi-algebraic geometry. Counting complexity 
classes over the reals have been defined previously by Meer 
\cite{Meer00}, and studied extensively by other authors 
\cite{Burgisser-Cucker06}. 
These authors used a straightforward generalization to semi-algebraic 
sets of counting in the case of finite sets -- 
namely the counting function took the value of the 
cardinality of a semi-algebraic set if it happened to be finite, and 
$\infty$ otherwise. This is in our view not a fully satisfactory 
generalization since 
the count gives no information when the  semi-algebraic set is infinite, and 
most interesting semi-algebraic sets have infinite cardinality. 
Moreover, no real analogue of Toda's theorem has been proved using this 
definition of counting.

If one thinks of ``counting'' a semi-algebraic set $S\subset \R^k$ 
as computing a certain  discrete invariant, then a natural well-studied 
discrete topological invariant of $S$ is its   
Euler-Poincar\'e characteristic. 
For a closed and bounded semi-algebraic set $S$ 
the Euler-Poincar\'e characteristic, $\chi(S)$, is the alternating sum of the
Betti numbers of $S$, and it is possible to extend this definition to the class
of all semi-algebraic sets by additivity. 
This generalized Euler-Poincar\'e characteristic 
gives an isomorphism from the \bfdef{Grothendieck ring
of semi-algebraic sets} to $\Z$ 
(see \cite[Proposition 1.2.1]{Cluckers-Loeser08})), and 
thus corresponds to a mathematically natural notion of counting
semi-algebraic sets.

However, the Euler-Poincar\'e characteristic fails to 
distinguish between empty and non-empty semi-algebraic sets, since
a non-empty semi-algebraic set (e.g. an odd dimensional sphere) can
have have vanishing Euler-Poincar\'e characteristic. This seems to
immediately rule out using the Euler-Poincar\'e characteristic as a 
substitute for the  counting function. We make up for this deficiency
by replacing the Euler-Poincar\'e characteristic by the 
Poincar\'e polynomial $P_S(T)$ of the set $S$.
We now recall the relevant definitions.

\begin{notation}
For any semi-algebraic set $S\subset \R^k$ we denote by $b_i(S)$ the $i$-th
Betti number (that is the rank of the singular 
homology group $\HH_i(S) = \HH_i(S,\Z)$) of $S$.

We also let $P_S \in \Z[T]$ denote the \bfdef{Poincar\'e polynomial} of $S$, namely

\begin{equation}
\label{def:Poincarepolynomial}
P_S(T)\; \defeq\; \sum_{i \geq 0}  b_i(S)\; T^i.
\end{equation}
\end{notation}
Notice that for $S \subset \R^k$, $\deg(P_S) \leq k-1$. 
Also, it is easy to see that the Poincar\'e polynomial, $P_S(T)$,  carries
more complete information about $S$ than its Euler-Poincar\'e characteristic. 
Indeed, the number of semi-algebraically connected components, 
$b_0(S)$, of  $S$ is obtained by setting $T$ to $0$, 
and in case $S$ is closed and bounded 
we also recover $\chi(S)$ by setting $T$ to $-1$
in $P_S(T)$. Since $b_0(S) > 0$ if and only if $S$ is 
non-empty, $P_S$, unlike $\chi(S)$, can distinguish between
empty and non-empty semi-algebraic sets.
In particular, in case $S$ is a finite set of points, 
$P_S$ also contains the information regarding the cardinality of $S$ which 
in this case equals  $b_0(S) =P_S(0)$.

\begin{remark}
\label{rem:zeta}
The connection between counting points of varieties and their Betti numbers is 
more direct over fields of positive characteristic via the zeta function.
The zeta function of a variety defined over $\F_p$ is the exponential 
generating function of the sequence whose $n$-th term is the number of
points in the variety over $\F_{p^n}$. The zeta function of such a variety
turns out to be a rational function in one variable 
(a deep theorem of algebraic geometry first conjectured by Andre Weil
\cite{Weil}
and proved by Dwork \cite{Dwork} and  Deligne \cite{Deligne1, Deligne2}), 
and its numerator and denominator are products of 
polynomials whose degrees are the  Betti numbers of the variety 
with respect to a certain ($\ell$-adic) co-homology theory. The point of this 
remark is that the problems  of  ``counting'' varieties and computing their 
Betti numbers, are connected at a deeper level, and thus our 
choice of definition for a real analogue 
of $\#{\bf P}$ is not altogether ad hoc.
\end{remark}

The above considerations motivate us to depart 
from the definition of $\# {\bf P}_{\R}$ considered
previously in \cite{Meer00,Burgisser-Cucker06}. 
We denote our class  $\# {\bf P}_{\R}^{\dagger}$ to avoid any 
possible confusion with these authors' work.

\begin{definition}[The class $\#{\bf P}_{\R}^{\dagger}$] 
\label{def:sharp}
We say a sequence of functions 
\[
(f_n:\R^n \rightarrow \Z[T])_{n > 0}
\]
is in the class $\#{\bf P}_{\R}^{\dagger}$,
if there exists a language
\[
(S_n \subset \R^n)_{n > 0} \in \mathbf{P}_\R,
\]
as well as a polynomial $m(n)$,
such that
\[
f_{n}(\x) = P_{S_{m+n,\x}} 
\]
for each $\x \in \R^n$, where $S_{m+n,\x} = S_{m+n} \cap \pi^{-1}(\x)$ 
and $\pi:\R^{m+n} \rightarrow \R^n$ is the projection along the first 
$m$ co-ordinates. 
\end{definition}

\begin{remark}
Notice the formal similarity between Definitions \ref{def:sharp} and
\ref{def:sharpPdiscrete}, namely that in both cases the functions $f_n$   
``counts'' the fibers above $\x$, but the notion of counting is different
in each case.
\end{remark}

\begin{remark}
We make a few remarks about the class $\#{\bf P}_{\R}^{\dagger}$ 
defined above. First of all notice that
the  class $\#{\bf P}_{\R}^{\dagger}$ is quite robust.
For instance, given two sequences 
$(f_n)_{n > 0}, (g_n)_{n > 0} \in \#{\bf P}_{\R}^{\dagger}$ it follows
(by taking disjoint union of the corresponding semi-algebraic sets) that 
$(f_n+ g_n)_{n > 0} \in \#{\bf P}_{\R}^{\dagger}$, and also 
$(f_n g_n)_{n > 0} \in \#{\bf P}_{\R}^{\dagger}$ 
(by taking Cartesian product of the corresponding semi-algebraic sets
and using the multiplicative
property of the Poincar\'e polynomials, which itself is a consequence
of the Kunneth formula in homology theory.)

Secondly, note that while it is still an open question 
whether the Poincar\'e  polynomials of semi-algebraic sets 
can be computed in single exponential
time, there exists single exponential time algorithms to calculate its
first few (i.e., a constant number of) coefficients 
(or equivalently the first few Betti numbers of a given semi-algebraic)
\cite{Bas05-first}. 
Also, it is known that the closely related invariant, namely the
Euler-Poincar\'e characteristic can be computed in single exponential 
time \cite{B99,BPRbook2}. 
The reader is referred to the survey article \cite{Basu_survey} 
for a more  detailed account of the state-of-the-art ragarding these 
algorithmic problems.
\end{remark}

\subsection{Statements of the main theorems}
We can now state the main result of this paper.
\begin{theorem}[Real analogue of Toda's theorem]
\label{the:main}
\[
{\bf PH}^c_{\R} \subset {\bf P}_{\R}^{\#{\bf P}_{\R}^{\dagger}}.
\]
\end{theorem}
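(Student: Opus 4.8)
The plan is to prove the inclusion $\mathbf{PH}^c_{\R}\subset \mathbf{P}_{\R}^{\#\mathbf{P}_{\R}^{\dagger}}$ by induction on the number of quantifier alternations $\omega$, reducing the task to a topological construction that replaces a single block of quantifiers by a Poincar\'e-polynomial computation. The base case $\omega=0$ is trivial since $\mathbf{\Sigma}^c_{\R,0}=\mathbf{\Pi}^c_{\R,0}=\mathbf{P}_\R$. For the inductive step, consider a language in $\mathbf{\Sigma}^c_{\R,\omega}$ given by a formula $(\exists\,\Y^1\in\Sphere^{k_1})\,\psi_n(\X,\Y^1)$ where $\psi_n$ defines a language in $\mathbf{\Pi}^c_{\R,\omega-1}$; by the inductive hypothesis the inner part is handled by a $\#\mathbf{P}_{\R}^{\dagger}$-oracle machine, and the crux is to show that one further existential (or universal) block over a sphere can be absorbed into another oracle call. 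The key idea — this is where the topology enters, replacing Toda's combinatorics — is that an existential quantifier over a \emph{compact} set corresponds to a projection, and one can detect non-emptiness of a compact semi-algebraic set $K$ (equivalently, $b_0(K)>0$) by reading off $P_K(0)$, the constant term of its Poincar\'e polynomial. More generally, to handle nested alternations one needs a gadget that converts "$\exists\,\y$ such that the fiber has some specified topological complexity" into a single Poincar\'e polynomial computation on a cleverly built total space.

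The main technical device I would use is a mapping-cylinder / iterated-fibration construction: given the closed semi-algebraic family $T_n\subset \Sphere^{k_1}\times\cdots\times\Sphere^{k(n)}$ defining the quantifier-free matrix (which lies in $\mathbf{P}_\R$), build an auxiliary semi-algebraic set $W_n$, still in $\mathbf{P}_\R$ (polynomial-size description, membership testable in polynomial time), whose fibers over $\x\in\Sphere^{k(n)}$ have a Poincar\'e polynomial from which one can Boolean-decode whether $\x\in S_n$. Concretely, to replace one existential block $\exists\,\Y^1\in\Sphere^{k_1}$ applied to a set $A\subset\Sphere^{k_1}\times\Sphere^{k(n)}$ with closed fibers $A_\x$, one uses that the projection $\pi(A)$ is $\{\x : A_\x\neq\emptyset\}=\{\x : P_{A_\x}(0)>0\}$; for a universal block one complements, but since complementation breaks closedness one instead works with the \emph{pair} or uses an $\eps$-neighborhood/approximation to stay within closed sets on spheres — this is exactly the place where the restriction to the compact hierarchy $\mathbf{PH}^c_\R$ (Remark~\ref{rem:restrictiontocompacts}) is forced. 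Iterating, after $\omega$ stages one obtains a single $\mathbf{P}_\R$ family whose fiberwise Poincar\'e polynomial, computed by one oracle call to $\#\mathbf{P}_{\R}^{\dagger}$, encodes the truth value of the whole alternating sentence; a polynomial-time B-S-S machine then extracts the answer from the returned polynomial in $\Z[T]$.

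The step I expect to be the main obstacle is the uniform topological bookkeeping across alternations: handling a \emph{universal} quantifier requires passing to a complement, which is open rather than closed, and na\"ively destroys the compactness/closedness hypotheses needed to keep applying the construction and to keep the Poincar\'e polynomial well-behaved (and finite). The resolution — and the technically heaviest part of the argument — is to systematically replace open sets by closed semi-algebraic approximations (shrinking by an infinitesimal $\eps$, working over $\R\la\eps\ra$ and then specializing, as is standard in algorithmic real algebraic geometry), while proving that these approximations do not change the relevant homotopy type of the fibers and that the whole construction remains of polynomial size with fibers testable in $\mathbf{P}_\R$. One must also verify that the intermediate sets can genuinely be taken to be closed subsets of spheres of polynomially bounded dimension, so that the output of each stage is again a legitimate instance of the compact hierarchy; getting all of these uniform and composable — rather than just doing it for a single quantifier — is the real content of the proof.
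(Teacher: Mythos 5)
Your high-level plan --- induction on $\omega$, encode truth values in Betti numbers of fibers, make one oracle call at the end --- matches the paper's strategy, and your observation that existential quantification is projection and that $b_0>0$ detects non-emptiness handles $\omega=1$. The genuine gap appears at $\omega\geq 2$: you correctly say one needs a ``gadget'' converting the Betti numbers of a fiber of $\pi_\Y(S)$ into Betti numbers of an explicitly constructible $\mathbf{P}_\R$-set, but the mapping-cylinder idea you gesture at does not give one, and this gadget is precisely the heart of the paper. What actually works is the $(p+1)$-fold \emph{join over a map} $J^p_{\pi_\Y}(S)$ together with its explicit semi-algebraic realization $D^p_{\Y}(S)\subset\Sphere^N$ (with $N$ polynomially bounded). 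A Vietoris--Begle argument (Theorem~\ref{the:compactcovering}) shows $J^p_{\pi_\Y}(S)$ is $p$-equivalent to $\pi_\Y(S)$ when $\pi_\Y$ covers semi-algebraic compacts; Proposition~\ref{prop:polytime} keeps the construction in $\mathbf{P}_\R$; and Lemma~\ref{lem:main} is the lever that makes the induction run, by showing that quantified blocks inside $\Phi$ can be pulled out through $D^p_\Y(\Phi)$ so that each application strictly reduces the number of alternations while preserving the Betti numbers in degrees $<p$ of the set one cares about. Without this construction (or an equivalent), there is no way to trade ``topology of the image of a projection'' for ``topology of a set whose membership is testable in polynomial time,'' and the induction stalls at $\omega=1$.

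Your treatment of universal quantifiers also diverges from what actually works. You propose to complement and then $\eps$-approximate to restore closedness, working over $\R\la\eps\ra$ and specializing. The paper instead maintains \emph{two} parallel realizations, $D^p_{\Y,c}$ (closed) and $D^p_{\Y,o}$ (open): when $Q_1=\forall$ one applies the open construction to $\neg\Psi_n$, and then recovers the Betti numbers of the original compact set from those of the complement via Alexander--Lefshetz duality (Theorem~\ref{the:alexanderduality}), with no infinitesimals anywhere in the construction itself. Your $\eps$-route runs directly into an obstruction the paper is explicit about: replacing infinitesimals by concrete small positive elements of $\R$ in polynomial time is not known to be feasible, and this is precisely the reason the authors cite for being unable to extend the result from $\mathbf{PH}^c_\R$ to the full $\mathbf{PH}_\R$. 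So as sketched, the universal-quantifier step of your proposal does not close.
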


Recall that the class ${\bf P}_{\R}^{\#{\bf P}_{\R}^{\dagger}}$ 
consists of sequences of semi-algebraic sets, such that
there exists a real machine which can check membership
in the sets in the sequence in polynomial time, given access to an
oracle with the power to compute  a function in 
${\#{\bf P}_{\R}^{\dagger}}$.

\begin{remark}
We leave it as an open problem to prove Theorem \ref{the:main} with
${\bf PH}_{\R}$ instead of ${\bf PH}^c_{\R}$ on the left hand side. One 
possible
approach would be to use the recent results of Gabrielov and Vorobjov 
\cite{GV07} 
on replacing arbitrary semi-algebraic sets by compact semi-algebraic sets 
in  the same homotopy equivalence class using infinitesimal deformations. 
However, for  such a construction to be useful in our context, one would 
need to effectively
(i.e., in polynomial time) replace the infinitesimals used in the
construction by small enough positive elements of $\R$, and at present we are
unable to achieve this.
\end{remark}

As a consequence of our method, we obtain a reduction
(Theorem~\ref{the:main2})
that might be of independent interest.
We first define the following two problems:

\begin{definition}(Compact general decision problem with at most 
$\omega$ quantifier alternations (${\bf GDP_\omega^c}$))

\begin{itemize}
\item[Input.] 
A sentence $\Phi$ in the first order theory of $\R$
\[    (Q_1 \X^{1} \in \Sphere^{k_1})  \cdots 
(Q_\omega \X^{\omega} \in \Sphere^{k_\omega})
\phi(\X^1,\ldots,\X^\omega),
\]
where for each $i, 1 \leq i \leq \omega$,
$\X^i = (X^i_0,\ldots,X^i_{k_i})$ is a block of $k_i+1$ variables,
$Q_i \in \{\exists,\forall\}$, with $Q_j \neq Q_{j+1}, 1 \leq j < \omega$,
and
$\phi$ is a quantifier-free formula defining a {\em closed}
semi-algebraic subset $S$ of $\Sphere^{k_1}\times\cdots\times\Sphere^{k_\omega}$. 
\item[Output.] True or False depending on whether $\Phi$ is true or false
in the first order theory of $\R$.
\end{itemize}
\end{definition}

\begin{definition}(Computing the Poincar\'e polynomial of 
semi-algebraic sets (\textbf{Poincar\'e}))

\begin{itemize}
\item[Input.] A quantifier-free formula defining a 
semi-algebraic set $S\subset \R^k$.
\item[Output.] 
The Poincar\'e polynomial $P_S(T)$.
\end{itemize}
\end{definition}

\begin{theorem}
\label{the:main2}
For every  $\omega > 0$, there is a deterministic 
polynomial time reduction in the 
Blum-Shub-Smale model of 
$\bf{GDP_\omega^c}$ to \upshape \textbf{Poincar\'e}.
\end{theorem}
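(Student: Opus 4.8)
The plan is to reduce the truth of a sentence $\Phi$ with $\omega$ quantifier alternations over spheres to the computation of Poincar\'e polynomials of a sequence of \emph{compact} semi-algebraic sets built from the quantifier-free matrix $\phi$ by an iterated topological construction that mimics quantifier elimination one block at a time. The key observation is that over a sphere $\Sphere^k$, an existential quantifier $\exists \Y^i\in\Sphere^{k_i}$ applied to a closed set $S\subset \Sphere^{k_i}\times\Sphere^{k_{i+1}}\times\cdots$ has image $\pi(S)$ whose homotopy type (or at least whose $0$-th Betti number, which detects non-emptiness) can be recovered from a topological gadget attached to the projection $\pi$. Concretely, I would use a join/mapping-cylinder-type construction: to realize $\exists$-projection topologically, replace the image by the total space of a fibration or by an iterated fibered join of $S$ with itself over the base, so that the connectivity of the fibers is killed and the homology of the new space ``sees'' exactly the projection in low degrees. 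The first Betti number $b_0$ of the resulting space is then nonzero precisely when the projected set is nonempty. A $\forall$ quantifier is handled by the De Morgan dual, i.e.\ by complementation; since $\phi$ defines a \emph{closed} subset of a product of spheres and the ambient spaces are compact, one can pass to a closed set again (e.g.\ by shrinking to a closed set semi-algebraically homotopy equivalent to the open complement, using standard double-exponential-free constructions with one infinitesimal or, since we work over $\R$ and want a polynomial-time reduction, with an explicit small rational), keeping everything inside compact semi-algebraic sets at every stage.

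The key steps, in order, are: (1) Start from the innermost block and the closed set $S\subset \Sphere^{k_1}\times\cdots\times\Sphere^{k_\omega}$ defined by $\phi$; describe a polynomial-size quantifier-free formula for the first topological gadget $X_1$ (built over $\Sphere^{k_1}\times\cdots\times\Sphere^{k_{\omega-1}}$) whose $b_0$ of each fiber over $\Sphere^{k_1}\times\cdots\times\Sphere^{k_{\omega-1}}$ equals $1$ iff the innermost quantified formula is true at that point, and $0$ otherwise. (2) Iterate: peeling off $Q_{\omega-1},\ldots,Q_1$ one block at a time, each time producing a new compact semi-algebraic set $X_j$ of size polynomial in the input, so that after $\omega$ steps we have a single compact semi-algebraic set $X_\omega\subset\R^{N}$ with $N$ polynomially bounded, such that $\Phi$ is true iff $b_0(X_\omega)\neq 0$, equivalently iff $P_{X_\omega}(0)\neq 0$. (3) Feed $X_\omega$ to the \textbf{Poincar\'e} oracle, read off the constant term $P_{X_\omega}(0) = b_0(X_\omega)$, and output True iff it is positive. (4) Verify that the whole construction — writing out the formulas, managing the (constantly many) infinitesimal/rational parameters, and the final call — is carried out by a Blum-Shub-Smale machine in time polynomial in the size of $\Phi$, which is where it is essential that $\omega$ is fixed so that only a constant number of nested constructions occur.

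The main obstacle I expect is step (2): making the topological gadget for a single $\exists$-projection both (a) correct — its fibers must have the right $b_0$, which requires the construction to genuinely kill the possibly-complicated homotopy type of the fibers of the original projection while not introducing spurious components — and (b) describable by a quantifier-free formula of \emph{polynomial} size defining a \emph{closed} set in a product of spheres, so that the induction hypothesis is preserved and the reduction stays polynomial-time. The natural candidate is an iterated fibered join construction (forming $S \ast_B S \ast_B \cdots$, enough times to exceed the fiber dimension, which makes the total space highly connected along fibers), combined with a homotopy-colimit/homogenization trick to land back inside a sphere; controlling the description complexity of such joins and of the $\forall$-step complementation (keeping sets closed using only a fixed number of infinitesimals, then specializing them to rationals without blowing up the bit-size in a way incompatible with the B-S-S polynomial-time requirement) is the delicate technical heart. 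A secondary subtlety is that \textbf{Poincar\'e} is stated for arbitrary (not necessarily compact) semi-algebraic sets in $\R^k$, which is fine — it only makes the oracle more powerful — but one must make sure the \emph{input formula} to the oracle is of polynomial size, which again comes down to controlling the description complexity through the $\omega$ iterations.
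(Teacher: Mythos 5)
Your handling of the existential step is essentially the paper's: the iterated fibered join $J^p_{\pi}(S)$ realized as a semi-algebraic set inside a sphere (the paper calls this $D^p_{\Y,c}(S)$), with $p$-equivalence to the image $\pi(S)$ following from the high connectivity of iterated joins plus a Vietoris--Begle argument. That part is sound.

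The genuine gap is your treatment of $\forall$. You propose negating, taking the open complement, and then \emph{shrinking the open complement to a closed semi-algebraic set of the same homotopy type} using ``one infinitesimal, or an explicit small rational.'' This is exactly the step the paper flags as being beyond reach in polynomial time: the remark after Theorem~\ref{the:main} explicitly points out that replacing the infinitesimals in such deformations (\`a la Gabrielov--Vorobjov) by concrete elements of $\R$ in polynomial time is an open problem, and the bounds coming from quantifier elimination are doubly exponential. There is no ``explicit small rational'' that works uniformly for a polynomial-size input formula, and the B-S-S machine has no polynomial-time way to compute a safe threshold. The paper sidesteps this entirely by never converting an open set back to a closed one. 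Instead it maintains \emph{two} parallel realizations of the fibered join -- a closed one $D^p_{\Y,c}$ built on the standard simplex, and an open one $D^p_{\Y,o}$ built on a thickened simplex with strict inequalities -- and lets the recursion alternate between producing open and closed sets as the quantifier type flips. The relation between the Betti numbers of a compact $K\subset\Sphere^m$ and those of its open complement $\Sphere^m\setminus K$ is then handled \emph{arithmetically}, via Alexander--Lefshetz duality applied to the output Poincar\'e polynomial (the maps $F_n$ in Proposition~\ref{prop:main} rearrange coefficients according to the duality formulas). This is also why the paper's reduction outputs a set \emph{together with} a polynomial-time coefficient-shuffling map $F_n$, rather than a single set whose $b_0$ directly encodes truth as in your step~(2): the $\forall$ steps force you to read off higher Betti numbers, and duality at the level of numbers is free whereas duality at the level of sets is not. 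You should replace the shrinking step by the open join construction together with Alexander--Lefshetz duality at the coefficient level; without that substitution, the proposed reduction is not polynomial time.

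A secondary point: you set up the induction from the innermost quantifier block outward, whereas the paper's Lemma~\ref{lem:main} pulls the outermost block through the $D$ construction and recurses on the formula with one fewer alternation. The outside-in form is what makes the bookkeeping clean (each $D$ multiplies the number of inner quantified variables by $p+1$, and one needs $\omega$ fixed precisely so that the $(p+1)^{\omega}$ blow-up stays polynomial). Your inside-out description could probably be rearranged into the same recursion, but as written the statement ``$b_0$ of each fiber equals $1$ iff the innermost formula is true'' is not quite right ($b_0$ is positive, not necessarily $1$), and the step passing from the truth-detecting fibers of $X_1$ to a set whose fibers detect truth of the next quantifier block is not spelled out; that step is where the join must be applied again, and where the open/closed alternation and duality come in.
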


\subsection{Summary of the main ideas}
Our main tool is a topological construction described in 
Section~\ref{sec:saconstructions}, which,
given a semi-algebraic set 
$S \subset \Sphere^m\times \Sphere^n$ which is open or closed and defined 
by some \emph{quantified} formula $\Phi(\X,\Y)$, and given an integer
$p \geq 1$, constructs
{\em efficiently}  a semi-algebraic set, $D_\Y^p(S)$, such that 
\begin{equation}\label{eq:D}
b_i(\pi_\Y(S)) = b_i(D_\Y^p(S)), \quad 0 \leq i < p
\end{equation}
(where $\pi_\Y$ is the projection along the $\Y$ coordinates.)
Moreover such that the description of $D_\Y^p(S)$ requires fewer quantifier blocks than the number of quantifier blocks that appear 
in the formula 
$$ \exists \Y \Phi(\X,\Y);$$
that describes $\pi_\Y(S)$.
(For technical reasons 
we need two different constructions depending on whether $S$ is an
open or a closed semi-algebraic set, but we prefer to 
ignore this point in this rough outline.)
An infinitary version of such a construction 
(and indeed some of the basic  ideas  behind this construction)
is described in \cite{GVZ04}. However, the main goal in \cite{GVZ04} 
was to obtain upper bounds on the Betti numbers of semi-algebraic (as well as
semi-Pfaffian) sets defined by quantified formulas, and this
is achieved by bounding the Betti numbers of certain sets appearing in the
$E_1$ term of a certain (the so called ``descent'') spectral sequence which
is guaranteed to converge to the homology of the given set.

In this paper we need to be able to recover exactly (not just bound) the
Betti numbers of $\pi_\Y(S)$ from those of  $D_\Y^p(S)$.
Moreover, it is very important in our context that 
membership in the semi-algebraic set  $D_\Y^p(S)$ should be checkable
in polynomial time, given that the same is true for $S$. 
Notice that even if there exists an efficient (i.e., polynomial time) 
algorithm for checking 
membership in $S$, the same need not be true for the image $\pi_\Y(S)$. 

We will now illustrate how the construction of $D_\Y^p(S)$
connects the compact real %
polynomial hierarchy to the computation of the
Betti numbers of semi-algebraic sets, by looking at the special case of one and two quantifier alternations.

\subsubsection{Case of one quantifier}
First consider the class ${\bf \Sigma}_{\R,1}^c$. 
Consider a closed semi-algebraic
set $S \subset \Sphere^k \times \Sphere^\ell$ defined by a quantifier-free
formula $\phi(\X,\Y)$ and let 
\[
\pi_\Y: \Sphere^k \times \Sphere^\ell \rightarrow \Sphere^k
\]
be the projection map along the $\Y$ variables.

Then the formula 
\[
\Phi(\X) = \exists \Y \;\phi(\X,\Y)
\] 
is satisfied by $\x \in \Sphere^k$
if and only if 
$b_0(S_\x) \neq 0$, where $S_\x = S \cap \pi_\Y^{-1}(\x)$.
Thus, the problem of deciding the truth of $\Phi(\x)$ is reduced to computing
a  Betti number (the $0$-th) of the fiber of $S$ over $\x$. 
 
Now consider the class ${\bf \Pi}_{\R,1}^c$. Using the same notation as above
we have that the formula
\[
\Psi(\X) = \forall \Y\; \phi(\X,\Y)
\] 
is satisfied by  $\x \in \Sphere^k$
if and only if 
the formula 
\[
\neg\Psi(\X) = \exists \Y\; \neg\phi(\X,\Y)
\]
does not hold, which means, according to the previous case, that 
we have
$b_0(\Sphere^\ell \setminus S_\x) = 0$, which is equivalent to
$b_\ell(S_\x) = 1$. Notice that, as before,
the problem of deciding the truth of $\Psi(\x)$ is reduced to computing
a  Betti number (the $\ell$-th) of the fiber of $S$ over $\x$. 

\subsubsection{Case of two quantifiers}
Proceeding to a slightly more non-trivial case, 
consider the class ${\bf \Pi}_{\R,2}^c$ and let
$S \subset \Sphere^k \times \Sphere^\ell \times \Sphere^m$ be
a closed semi-algebraic set defined by a quantifier-free formula
$\phi(\X,\Y,\ZB)$ and let 
\[
\pi_{\ZB}: \Sphere^k \times \Sphere^\ell \times
\Sphere^m  \rightarrow \Sphere^k\times \Sphere^\ell
\]
be the projection map along the $Z$ variables, and
\[
\pi_\Y: \Sphere^k \times \Sphere^\ell \rightarrow \Sphere^k
\]
be the projection map along the $Y$ variables as before. Consider the
formula 
\[
\Phi(\X) = \forall \Y \; \exists \ZB \;\phi(\X,\Y,\ZB).
\]

This formula can be recast as:
\[
\Phi(\X) = \forall \Y \; (\X,\Y)\in \pi_{\ZB}(S).
\]

Thus, for any $\x\in \Sphere^k$,  $\Phi(\x)$ holds if and only if we have the following situation:
\[
\xymatrix{
 & S \ar@{^{(}->}[r] \ar[d]^{\pi_{\ZB}}& 
 \Sphere^k\times \Sphere^\ell\times \Sphere^m \ar[d]^{\pi_{\ZB}}\\
\{\x\}\times \Sphere^\ell \ar@{^{(}->}[r] \ar[dr]^{\pi_{\Y}}
& \pi_{\ZB}(S) \ar@{^{(}->}[r] \ar[d]^{\pi_{\Y}}& 
\Sphere^k\times \Sphere^\ell \ar[d]^{\pi_{\Y}}\\
& \x\in \pi_{\Y,\ZB}(S) \ar@{^{(}->}[r] & \Sphere^k
}
\]
i.e., if and only if the $\pi_\Y$ fiber $\left(\pi_{\ZB}(S)\right)_\x$ is equal to $\Sphere^\ell$.
This can be formulated in terms of Betti numbers by the condition:
$$ b_\ell\left(\left(\pi_{\ZB}(S)\right)_\x\right)=1.$$
The construction mentioned in~\eqref{eq:D} gives, for $p=\ell+1$, the existence of a semi-algebraic set $D_{{\ZB}}^{\ell+1}(S)$ such that 
$b_\ell(D_{{\ZB}}^{\ell+1}(S))=b_\ell(\pi_{\ZB}(S)).$ Fortunately, the construction of the set $D_{{\ZB}}^{\ell+1}(S)$ is compatible with taking fibers, 
so that we have, for all $\x\in \Sphere^k$,
$$ b_\ell\left(\left(\pi_{\ZB}(S)\right)_\x\right)
= b_\ell\left(D_{{\ZB}}^{\ell+1}(S)_\x\right).$$

Thus for any  $\x \in \Sphere^k$, the
truth or falsity of $\Phi(\x)$ is determined by a certain Betti number of the
fiber $D_{{\ZB}}^{\ell+1}(S)_\x$ over $\x$ of a certain
semi-algebraic set  $D_{{\ZB}}^{\ell+1}(S)$ 
which can be constructed efficiently
in terms of the set $S$. The idea behind the proof of the main theorem
is a recursive application of the above argument in case when the number
of quantifier alternations is larger (but still bounded by some constant)
while keeping track of the growth in the sizes of the intermediate formulas
and also the number of quantified variables.

The rest of the paper is organized as follows. In Section \ref{sec:ingredients}
we fix notation, and prove the topological results needed for the proof of
the two main theorems. 
In Section \ref{sec:saconstructions} we describe the semi-algebraic 
construction of the sets $D_\Y^p(S)$ alluded to above and prove 
its important  properties. 
We prove the main theorems in Section \ref{sec:proof}.

\section{Topological Ingredients}
\label{sec:ingredients}
We first fix a notation.

\begin{notation}
\label{not:simplex}
For each $p \geq 0$ we denote
\[
\Delta^p = \{(t_0,\ldots,t_p)\mid t_i \geq  0, 0 \leq i \leq p, \sum_{i=0}^p 
t_i = 1
\}
\]
the standard $p$-simplex.
\end{notation}

We now describe some constructions in algebraic topology which 
will be useful later in the paper.

\subsection{Properties of the join}
We first recall the definition of  the join of two topological spaces $X$ and
$Y$.

\begin{definition}
\label{def:twofoldjoin}
The join $J(X,Y)$  of two topological spaces $X$ and $Y$ is defined by
\begin{equation}
\label{eqn:definitionoftwofoldjoin}
J(X,Y) \defeq X\times Y 
\times \Delta^1/\sim, 
\end{equation}
where
\[
(x,y,t_0,t_1) \sim (x',y',t_0,t_1)
\] 
if
$t_0 = 1,x = x'$ or  $t_1=1, y= y'$.
\end{definition}

 \begin{figure}[hbt]
         \centerline{
           \scalebox{0.5}{
             \input{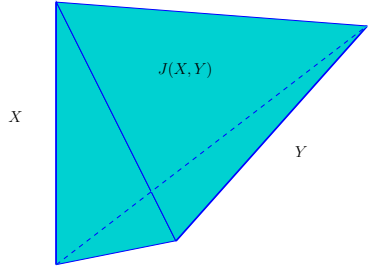}
             }
           }
         \caption{Join of two segments}
         \label{fig:join}
 \end{figure}

Intuitively, $J(X,Y)$ is obtained by joining each point of $X$ with
each point of $Y$ by a unit interval (see Figure \ref{fig:join}).

\begin{example}
\label{eg:joinoftwospheres}
It is easy to check from the above definition that
the join, $J(\Sphere^m,\Sphere^n)$, of two spheres
is again (homeomorphic to) a sphere, namely $\Sphere^{m+n+1}$.
\end{example}

By iterating the above definition with the same space $X$ we obtain 
\begin{definition}
\label{def:pfoldjoin}
For $p \geq 0$ the $(p+1)$-fold join $J^p(X)$  of $X$ is  
\begin{equation}
\label{eqn:definitionofjoin}
J^p(X) \defeq \underbrace{X\times\cdots\times X}_{(p+1)\mbox{ times }}
\times \Delta^p/\sim, 
\end{equation}
where
\[
(x_0,\ldots,x_p,t_0,\ldots,t_p) \sim (x_0',\ldots,x_p',t_0,\ldots,t_p)
\] 
if for each $i$ with $t_i \neq 0$, $x_i = x_i'$.
\end{definition}

\begin{example}
\label{eg:joinofzerodimspheres}
Using Example \ref{eg:joinoftwospheres} it is easy to see  
that the $(p+1)$-fold
join, $J^p(\Sphere^0)$, of the zero dimensional sphere
is homeomorphic to $\Sphere^p$.
\end{example}

We will need the fact that the iterated join of a topological space is 
highly connected. In order to make this statement precise we first define

\begin{definition}[$p$-equivalence]
\label{def:p-equivalence}
A map $f: A \rightarrow B$ between two topological spaces is called a 
\bfdef{$p$-equivalence} if the induced homomorphism 
\[
f_*: \HH_i(A) \rightarrow \HH_i(B)
\]
is an isomorphism for all $0 \leq i < p$, and an epimorphism for $i=p$,
and we say that $A$ is \bfdef{$p$-equivalent} to $B$. (Note that $p$-equivalence
is not an equivalence relation
: e.g. for any $p\geq 0$, the map taking $\Sphere^p$ to a point is a $p$-equivalence, 
but no map from a point into $\Sphere^p$ is one.)
\end{definition}

Observe from Example \ref{eg:joinofzerodimspheres} that 
$J^p(\Sphere^0) \cong \Sphere^p$ is $p$-equivalent to a point.
In fact, this holds much more generally and we have that
 
\begin{theorem}
\label{the:pjoinconnectivity}
Let $X$ be 
a semi-algebraic set.
Then, the $(p+1)$-fold join
$J^p(X)$ is $p$-equivalent to a point.
\end{theorem}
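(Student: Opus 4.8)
The strategy is to induct on $p$, using the iterated join structure. The base case $p=0$ is trivial: any nonempty $X$ is $0$-equivalent to a point because $\HH_0$ surjects onto $\HH_0(\mathrm{pt})=\Z$. For the inductive step, the key observation is that the $(p+1)$-fold join decomposes as a union of two pieces whose overlap is controlled. Concretely, inside $J^p(X)=X^{p+1}\times\Delta^p/\sim$, let $A$ be the subset where $t_p < 1$ (equivalently, the barycentric coordinate on the last factor is not everything) and $B$ the subset where $t_p > 0$. Then $A$ deformation retracts onto $J^{p-1}(X)$ (the sub-join obtained by setting $t_p=0$), the piece $B$ deformation retracts onto the last copy of $X$ sitting at $t_p=1$, hence $B$ is contractible onto... no—$B$ retracts to a single copy of $X$, which need not be contractible, so I must be more careful: instead take $B$ to be a cone-like neighborhood. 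The cleaner decomposition is $J^p(X) = J^{p-1}(X) * X$ (the join of the $p$-fold join with one more copy), and then use that the join $C*X$ with anything is related to the suspension / has good connectivity.

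The cleanest route is the following. Use the standard fact that for the join of two spaces there is a homotopy equivalence (or at least a homology isomorphism in a range)
\[
\widetilde{\HH}_i(J(Y,Z)) \cong \bigoplus_{a+b = i-1} \widetilde{\HH}_a(Y)\otimes \widetilde{\HH}_b(Z) \;\oplus\; \bigoplus_{a+b=i-2}\mathrm{Tor}(\widetilde{\HH}_a(Y),\widetilde{\HH}_b(Z)),
\]
coming from the fact that $J(Y,Z)$ is homotopy equivalent to the suspension of $Y\wedge Z$ (for well-pointed spaces), combined with the Künneth theorem; semi-algebraic sets are triangulable hence well-pointed, so this applies. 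From this formula one reads off that $J(Y,Z)$ is $(\mathrm{conn}(Y)+\mathrm{conn}(Z)+2)$-connected in the appropriate sense. Applying this inductively to $J^p(X) = X * X * \cdots * X$ ($p+1$ factors): each factor $X$ is nonempty so $\widetilde{\HH}_{\le -1}$ vanishing is automatic (i.e. $X$ is $(-1)$-connected), and each join operation raises connectivity by at least $1$, so after $p$ join operations we gain $\widetilde{\HH}_i(J^p(X)) = 0$ for $i < p$. Since $J^p(X)$ is also nonempty and connected (for $p\geq 1$), the map $J^p(X)\to \mathrm{pt}$ induces an isomorphism on $\HH_i$ for $i<p$ and trivially an epimorphism on $\HH_p$, which is exactly $p$-equivalence.

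**Main obstacle.** The technical care is all in the homotopy equivalence $J(Y,Z)\simeq \Sigma(Y\wedge Z)$, which requires the spaces to be well-pointed (non-degenerately based). This is where semi-algebraicity (or just triangulability/CW structure) is genuinely used; one cannot state the theorem for arbitrary topological spaces. I would either cite this as a standard fact (e.g. from Milnor's work on joins, or a textbook such as May or tom Dieck) or, to keep things self-contained, prove the connectivity bound directly via a Mayer--Vietoris argument: cover $J^p(X)$ by the open sets $U = \{t_p \neq 1\}$ and $V = \{t_p \neq 0\}$ (or rather slightly shrunk closed versions), note $U$ deformation retracts to $J^{p-1}(X)$, $V$ deformation retracts to a copy of $X$, and $U\cap V$ deformation retracts to $J^{p-1}(X)\times X$; feeding the inductive hypothesis ($J^{p-1}(X)$ is $(p-1)$-equivalent to a point, so $\widetilde\HH_i = 0$ for $i < p-1$ and $\HH_{p-1}$ surjects onto $0$... careful here) into the Mayer--Vietoris sequence and chasing degrees gives $\widetilde\HH_i(J^p(X)) = 0$ for $i<p$. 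The one subtlety to watch is the epimorphism-at-$p$ clause in the definition of $p$-equivalence: since the target is a point with $\HH_p(\mathrm{pt}) = 0$ for $p\geq 1$, this condition is vacuous, so there is nothing extra to check beyond the isomorphism range $i<p$ together with path-connectedness to handle $i=0$ when $p\geq 1$. I would present the Mayer--Vietoris induction as the main argument since it keeps the paper self-contained and meshes with the semi-algebraic category used throughout.
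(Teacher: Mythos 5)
Your argument is correct, and it goes a step further than the paper does: the paper dispatches this theorem with a one-line citation to Matou\v{s}ek (Proposition~4.4.3 of \emph{Using the Borsuk--Ulam Theorem}), while you actually supply the proof. Your primary route --- $J(Y,Z)\simeq\Sigma(Y\wedge Z)$ for well-pointed spaces, then the K\"unneth formula, then an induction showing each join with a nonempty factor increases the homological connectivity by one --- is precisely the standard proof that the cited reference relies on, and your observation that semi-algebraic sets are triangulable hence well-pointed is exactly the right place to use the hypothesis on $X$. Your fallback Mayer--Vietoris sketch is also sound: with $U=\{t_p\neq 1\}\simeq J^{p-1}(X)$, $V=\{t_p\neq 0\}\simeq X$, and $U\cap V\simeq J^{p-1}(X)\times X$, the splitting $\widetilde\HH_i(Y\times Z)\cong\widetilde\HH_i(Y)\oplus\widetilde\HH_i(Z)\oplus\widetilde\HH_i(Y\wedge Z)$ makes the degree chase go through with no surprises, and you correctly note that the epimorphism clause in degree $p$ is vacuous since $\HH_p(\mathrm{pt})=0$ for $p\geq 1$. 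One small point worth making explicit: the theorem implicitly requires $X\neq\emptyset$ (otherwise $J^p(X)=\emptyset$, and $\HH_0(\emptyset)\to\HH_0(\mathrm{pt})$ is not onto); you do flag this in the base case, and it is the same implicit assumption the paper and the cited reference make.
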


\begin{proof}
This is classical (see for instance 
\cite[Proposition 4.4.3]{Matousek_book2}).
\end{proof}

\subsection{Join over a map}
In our application we need the join construction over certain class of maps
(to be specified later). We first recall the notion of a fibered product
of a topological space.

\begin{notationdefinition}
\label{notdef:fiberproduct}
Let $f:A \rightarrow B$ be a map
between topological spaces $A$ and $B$. 
For each $p \geq 0$, 
We denote by $W_f^p(A)$ the \bfdef{$(p+1)$-fold fiber product} of $A$ over $f$. 
In other words
\[
W_f^p(A) = \{(x_0,\ldots,x_p) \in A^{p+1}
\mid
f(x_0) = \cdots = f(x_p)
\}.
\]
\end{notationdefinition}

\begin{definition}[Topological join over a map]
\label{def:joinoveramap1}
Let $f:A \rightarrow B$ be a map
between topological spaces $A$ and $B$. 
For $p \geq 0$ the \bfdef{$(p+1)$-fold join} $J^p_f(A)$  of $A$ over $f$ 
is  
\begin{equation}
\label{eqn:definitionofjoin1}
J^p_f(A) \defeq W_f^p(A) \times \Delta^p/\sim, 
\end{equation}
where
\[
(x_0,\ldots,x_p,t_0,\ldots,t_p) \sim (x_0',\ldots,x_p',t_0,\ldots,t_p)
\] 
if for each $i$ with $t_i \neq 0$, $x_i = x_i'$.
\end{definition}

We now impose certain conditions on the map $f$.
\subsection{Compact Coverings}
Recall that we call $K \subset \R^n$ a semi-algebraic compact 
if it is a closed and bounded semi-algebraic set. 
\begin{notation}
For any semi-algebraic $A \subset \R^n$, we denote by 
$\K(A)$ the
collection of all semi-algebraic compact subsets of $A$.
\end{notation}

\begin{definition}
\label{def:compactcovering}
Let $f: A \to B$ be a semi-algebraic map. We say that $f$ \bfdef{covers
semi-algebraic compacts} if for any $L\in \K(f(A))$, there
exists $K\in \K(A)$ such that $f(K)=L.$
\end{definition}

The following theorem relates the topology of $J^p(A)$ to that of
the image of $f$ in the case when $f$ covers semi-algebraic
compacts and is crucial for what follows.

\begin{theorem}
\label{the:compactcovering}
Let $f: A \to B$ be a semi-algebraic map that covers
semi-algebraic compacts. 
Then for every $p \geq 0$, the map $f$ induces a $p$-equivalence 
$J(f): J_f^p(A) \to f(A)$.
\end{theorem}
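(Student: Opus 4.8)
The plan is to show that the map $J(f)\colon J_f^p(A)\to f(A)$ induced by the projection $(x_0,\ldots,x_p,t)\mapsto f(x_0)$ (well-defined since all $f(x_i)$ coincide on $W_f^p(A)$) is a $p$-equivalence, and to do this by a colimit argument over semi-algebraic compact subsets of $f(A)$, reducing to the compact case and then to Theorem~\ref{the:pjoinconnectivity}. First I would observe that over any point $b\in f(A)$ the fiber of $J(f)$ is precisely $J^p(f^{-1}(b))$, so that $J(f)$ is "fiberwise the iterated join," and by Theorem~\ref{the:pjoinconnectivity} each such fiber is $p$-equivalent to a point. The essential point is to upgrade this fiberwise statement to a statement about the total map, and the bridge is the hypothesis that $f$ covers semi-algebraic compacts.

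The key steps, in order, would be: (1) Reduce to the compact situation. Singular homology commutes with direct limits, and every semi-algebraic set is the direct limit (under inclusion) of its semi-algebraic compact subsets; applying this to $f(A)$, to $A$ (restricted appropriately), and to $J_f^p(A)$, and using that the covering-compacts hypothesis guarantees that for $L\in\K(f(A))$ we can choose $K\in\K(A)$ with $f(K)=L$, we are reduced to proving that for a surjective semi-algebraic map $g\colon K\to L$ between semi-algebraic compacts the induced map $J_g^p(K)\to L$ is a $p$-equivalence. (2) For the compact case, use a Mayer–Vietoris / spectral-sequence or a Vietoris–Begle type argument: cover $L$ by finitely many semi-algebraically contractible (or conic) pieces adapted to a semi-algebraic triangulation of $g$ compatible with the stratification, over each of which $g$ is, up to homotopy, a product, so that $J_g^p$ restricted there is the fiberwise $(p+1)$-fold join of the fiber, which by Theorem~\ref{the:pjoinconnectivity} is $p$-connected over that piece. (3) Patch these local $p$-equivalences together: the Leray spectral sequence of $J(f)$ (or an induction on the number of pieces in the cover using relative Mayer–Vietoris) has $E_2^{s,t}$ involving the homology of $L$ with coefficients in the (locally constant, in low degrees) sheaf $b\mapsto \HH_t(J^p(g^{-1}(b)))$; since these stalks vanish for $0<t\le p$ and equal $\Z$ for $t=0$, the spectral sequence collapses in total degree $<p$ and gives the isomorphism $\HH_i(J_g^p(K))\cong\HH_i(L)$ for $i<p$ and a surjection for $i=p$.

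The main obstacle I expect is step (2)–(3): making precise the sense in which $g$ is "locally a product up to homotopy" and controlling the coefficient system, since a general surjective semi-algebraic map need not be a fibration and its fibers can vary wildly in homeomorphism type. The clean way around this is Hardt's semi-algebraic triviality theorem: there is a finite semi-algebraic partition of $L$ over each stratum of which $g$ is semi-algebraically trivial, hence $J_g^p(K)$ restricts to an honest fiber bundle with fiber $J^p(\text{fiber})$, which is $p$-equivalent to a point by Theorem~\ref{the:pjoinconnectivity}; one then runs the descending induction on the partition, gluing with Mayer–Vietoris, so that at each stage one only ever needs that a map which is a $p$-equivalence over each of two pieces and over their intersection is a $p$-equivalence over the union. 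Compactness (from the covering-compacts hypothesis) is what makes the triangulations finite and the direct-limit reduction legitimate, and is exactly why the hypothesis is needed. I would also remark that the case $p=0$ recovers the classical fact that a surjection covering compacts induces a surjection on $\pi_0$/$\HH_0$, which serves as the base of the induction and as a sanity check.
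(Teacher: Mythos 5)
Your proposal has the same overall architecture as the paper's proof: identify the fiber of $J(f)$ over $b$ as $J^p(f^{-1}(b))$, invoke Theorem~\ref{the:pjoinconnectivity} to see that these fibers are $p$-equivalent to a point, and then pass from the compact case to the general case by a direct-limit argument over $\K(A)$, using the covering-compacts hypothesis exactly where you use it (to get $\dlim_{K\in\K(A)}\HH_i(f(K))=\HH_i(f(A))$). The one place where you genuinely diverge is the compact case itself. There the paper simply observes that $J(f)$ is a closed surjection between semi-algebraic compacts with $p$-acyclic fibers and cites the Vietoris--Begle theorem (from \cite{BWW06}, with a remark that over a general real closed field one appeals to a semi-algebraic \v{C}ech theory). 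You instead sketch a proof of a Vietoris--Begle-type statement from scratch: Hardt triviality to get a finite partition of the base over which the map (hence the join) is a trivial bundle, then a Leray spectral sequence or a Mayer--Vietoris induction over the strata. That route is sound in outline --- the gluing step (``$p$-equivalence over two pieces and their intersection implies $p$-equivalence over the union'') is a correct five-lemma consequence of Mayer--Vietoris, and the fiberwise $p$-connectivity feeds in exactly as you say --- but it leaves more to verify: Hardt triviality gives a locally closed partition rather than an open cover, so one must actually produce a compatible triangulation or suitable semi-algebraic neighborhoods of strata to run Mayer--Vietoris, and one must control the coefficient sheaf near stratum boundaries. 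The paper's appeal to Vietoris--Begle packages all of that into a single citation; your version is more self-contained but correspondingly more laborious, and would need those details filled in to be complete. One small point you omit that the paper flags: when $\R\neq\mathbb{R}$, singular homology should be replaced by the Delfs--Knebusch semi-algebraic homology (both for the Vietoris--Begle step and for the direct-limit commutation), though this does not change the shape of the argument.
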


\begin{proof}

We begin with the case $A\in \K(\R^n)$. 
Let $J(f): J_f^p(A) \to f(A)$ be the map given by 
\[J(f)(x_0, \dots,x_p,t_0,\dots, t_p)=f(x_0). \]
The map $J(f)$ is well defined since $(x_0,\dots, x_p)\in W^p_f(A)$,
and is closed since $J_f^p(A)$ is a semi-algebraic compact.
Moreover, the fibers of $J(f)$ are $p$-equivalent to a point
by Theorem \ref{the:pjoinconnectivity}.
 
Thus, by the Vietoris-Begle theorem 
\cite[Theorem 2]{BWW06}
the map $J(f)$ induces
isomorphisms 
$$ J(f)_*: \HH_i(J_f^p(A)) \to \HH_i(f(A));$$
for $0 \leq i < p.$ Note that in the case $\R\neq \mathbb{R}$, the
validity of the Vietoris-Begle theorem can be seen as a corollary of
the existence of a semi-algebraic co-homology  that satisfies the
Eilenberg-Steenrod axioms for a \v{C}ech theory (see \cite{EP}). 

\bigskip

In the general case, consider $K_1 \subset K_2$ two semi-algebraic
compacts in $\K(A)$. The inclusion gives rise to the following diagram,
\[
\xymatrix{
J_f^p(K_1) \ar@{^{(}->}[r]^{i} \ar[d]^{J(f|_{K_1})}& 
J_f^p(K_2)\ar[d]^{J(f|_{K_2})} \\
f(K_1)\ar@{^{(}->}[r]^{j} & f(K_2) \\
}
\]
where the vertical maps are $p$-equivalence by the previous
case. We have a similar diagram at the homology level; if we take the
direct limit as $K$ ranges in $\K(A)$, we obtain the following:

\[
\xymatrix{
\dlim \HH_i(J_f^p(K)) \ar[r]^{\cong} \ar[d]^{\dlim J(f|_K)}& 
\HH_i(J_f^p(A))\ar[d]^{J(f)} \\
\dlim \HH_i(f(K))\ar[r]^{\cong} & \HH_i(f(A)) \\
}
\]
The isomorphism on the top level comes from the fact that homology and
direct limit commute~\cite[Theorem~7 on p.~162]{Spanier}, along with the fact that for a
semi-algebraic set, one can compute the homology using chains
supported exclusively on semi-algebraic
compacts~\cite{Delfs-Knebusch}. For the bottom isomorphism, we need
the additional fact that since 
$f$ covers
semi-algebraic compacts, 
the map $J(f): J_f^p(A) \to f(A)$ also covers semi-algebraic compacts.
Hence, we have $$ \dlim \{\HH_i(f(K)) \mid K\in\K(A)\}
=\dlim \{\HH_i(L) \mid L\in \K(B)\}. $$ Since each $J(f|_K)$ 
is a $p$-equivalence, 
the vertical homomorphisms are isomorphisms for $0 \leq i < p$, and
an epimorphisms for $i=p$.
\end{proof}

\begin{remark}
\label{rem:restrictiontocompacts}
Theorem~\ref{the:compactcovering} requires that the map $f$ 
covers semi-algebraic compacts.
This condition is
satisfied for a projection in the case the set $A$ 
is either open or compact.
Note also that Theorem~\ref{the:compactcovering} 
is not true without the
assumption that $f$ covers semi-algebraic compacts, 
which is why, in this paper, we restrict our attention to 
the \emph{compact} polynomial hierarchy.
\end{remark}

\subsection{Alexander-Lefshetz duality}
We will also need the classical Alexander-Lefshetz duality theorem
in order to relate the Betti numbers of a compact semi-algebraic subset
of a sphere to those of its complement.

\begin{theorem}[Alexander-Lefshetz duality]
\label{the:alexanderduality}
Let $K \subset \Sphere^n$ be a compact semi-algebraic subset with $n \geq 2$.
Then 
\begin{eqnarray*}
b_0(K) &=& 1 + b_{n-1}(\Sphere^n - K) - b_n(\Sphere^n - K),\\
b_i(K) &=& b_{n-i-1}(\Sphere^n - K), \;\;1 \leq i \leq n-2, \\
b_{n-1}(K) &=& b_0(\Sphere^n - K) - 1 + 
\max(1 - b_0(\Sphere^n - K),0),\\
b_{n}(K) &=& 1 - \min(1,b_0(\Sphere^n - K)).
\end{eqnarray*}
\end{theorem}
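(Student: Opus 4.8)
The plan is to derive the four displayed equalities from the classical Alexander duality isomorphism over the sphere, together with the long exact sequence of the pair and reduced homology bookkeeping. First I would set $U = \Sphere^n - K$, which is an open semi-algebraic set, and recall that Alexander duality (valid in the semi-algebraic category over an arbitrary real closed field, via a \v{C}ech-type cohomology satisfying the Eilenberg--Steenrod axioms, as already invoked in the proof of Theorem~\ref{the:compactcovering}) gives $\tilde{H}_i(K) \cong \tilde{H}^{n-i-1}(U)$ for all $i$. Since we are working with field (or rank) coefficients, $\dim \tilde{H}^{j}(U) = \dim \tilde{H}_j(U)$, so this reads
\[
\tilde{b}_i(K) = \tilde{b}_{n-i-1}(U), \qquad 0 \le i \le n,
\]
where $\tilde{b}$ denotes reduced Betti numbers. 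The entire proof is then the routine but careful task of translating between reduced and unreduced Betti numbers at the two ends $i=0$ and $i=n$ (equivalently $j=0$ and $j=n-1$ on the $U$ side), using $\tilde{b}_0 = b_0 - 1$ when the space is nonempty, $\tilde{b}_0 = 0 = b_0$ when it is empty, and $\tilde{b}_i = b_i$ for $i \ge 1$.

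The middle range is immediate: for $1 \le i \le n-2$ we have $n-i-1$ in the range $1 \le n-i-1 \le n-2$ as well (using $n \ge 2$... more precisely $n-i-1 \geq 1$ needs $i \leq n-2$ and $n-i-1 \leq n-1$ always), so both sides are already unreduced, giving $b_i(K) = b_{n-i-1}(U)$ directly; one only has to check that $b_{n-1}(U)$ and $b_n(U)$ do not intrude here, which they do not since $n-i-1 \le n-2$. For $i=0$: $\tilde{b}_0(K) = \tilde{b}_{n-1}(U)$. Now $K \neq \emptyset$ is forced (if $K$ were empty the statement still must be checked, but in the intended application $K$ is nonempty; in general one notes $b_0(\emptyset)=0$ and the right-hand side with $U=\Sphere^n$ gives $1 + 0 - 1 = 0$, consistent), so $b_0(K) - 1 = \tilde{b}_{n-1}(U)$, and since $n-1 \ge 1$ this equals $b_{n-1}(U) - \delta$ where $\delta = b_n(U) \cdot [\text{something}]$... — here I must be careful: $\tilde{b}_{n-1}(U) = b_{n-1}(U)$ outright when $n-1 \geq 1$. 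But the stated formula has a correction term $-b_n(\Sphere^n-K)$. This discrepancy signals that the duality I should use is not the naive one but must account for whether $U$ is nonempty; the cleanest route is to instead use the long exact sequence of the pair $(\Sphere^n, K)$ together with excision $H_j(\Sphere^n, K) \cong H_j(\Sphere^n, \Sphere^n - \mathring{K})$ and then Lefschetz duality $H_j(\Sphere^n, K) \cong H^{n-j}_c(U)$, tracking the $H_n(\Sphere^n) = \Z$ and $H_0$ terms explicitly.

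So the real plan is: (1) write the long exact homology sequence of $(\Sphere^n, K)$; (2) substitute $H_j(\Sphere^n,K) \cong \tilde{H}^{n-j-1}(\Sphere^n - K)$ (Alexander--Lefschetz duality in \v{C}ech cohomology, valid semi-algebraically) for $j < n$ and handle $H_n(\Sphere^n, K)$ separately; (3) plug in $H_*(\Sphere^n)$ which is $\Z$ in degrees $0, n$ and $0$ otherwise; (4) read off each $b_i(K)$ in terms of $b_j(\Sphere^n-K)$, paying attention to the connecting maps at the two ends, which is exactly where the $\max$ and $\min$ and the $\pm 1$ corrections come from — these encode the two mutually exclusive cases $b_0(\Sphere^n - K) = 0$ (i.e. $K = \Sphere^n$) versus $b_0(\Sphere^n - K) \ge 1$. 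The main obstacle, and the only genuinely delicate point, is precisely this boundary bookkeeping at degrees $0$ and $n$: one must verify that $b_n(K) = 1$ exactly when $\Sphere^n - K = \emptyset$ (forcing $b_0(\Sphere^n-K)=0$), that $b_{n-1}(K)$ picks up a $-1$ correction which is cancelled by the $\max(1 - b_0(\Sphere^n-K), 0)$ term precisely in the degenerate case, and that the $b_0(K)$ formula correctly accounts for the possibility $b_n(\Sphere^n - K) = 1$, which happens only when... it actually cannot happen for $U$ open and proper in $\Sphere^n$, so that term is there only to make the formula uniformly valid including edge cases. I would organize the write-up as a short case analysis: the generic case $\emptyset \neq K \subsetneq \Sphere^n$ where all formulas reduce to the plain $\tilde{b}_i(K) = \tilde{b}_{n-i-1}(\Sphere^n-K)$, followed by the two degenerate cases $K = \Sphere^n$ and (if one insists on allowing it) $K = \emptyset$, checking the formulas hold in each.
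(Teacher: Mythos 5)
Your overall plan --- invoke a duality theorem over $\Sphere^n$, feed the result into the long exact sequence of the pair $(\Sphere^n, K)$, and carefully track the end degrees --- is exactly what the paper does: it cites Lefschetz duality in the form $b_i(\Sphere^n - K) = b_{n-i}(\Sphere^n, K)$ and then reads off the four identities from the long exact sequence of $(\Sphere^n, K)$, using $\HH_i(\Sphere^n) = 0$ for $i \neq 0, n$ and $\HH_0(\Sphere^n) = \HH_n(\Sphere^n) = \Z$. However, the duality you actually write in step (2) of your revised plan, $\HH_j(\Sphere^n, K) \cong \tilde{H}^{\,n-j-1}(\Sphere^n - K)$, is off by one in degree: the correct statement is $\HH_j(\Sphere^n, K) \cong \HH^{n-j}(\Sphere^n - K)$, unreduced and with no shift. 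A quick sanity check: take $K$ an equatorial $\Sphere^1$ in $\Sphere^2$ and $j = 1$; then $\HH_1(\Sphere^2, \Sphere^1) = 0$, while $\tilde{H}^{0}(\Sphere^2 - \Sphere^1) \cong \Z$ since the complement is two open disks. Your alternative $\HH_j(\Sphere^n, K) \cong \HH^{n-j}_c(U)$ is likewise off (Poincar\'e duality turns that right-hand side into $\HH_j(U)$, not the group you need), and the proposed excision $\HH_j(\Sphere^n, K) \cong \HH_j(\Sphere^n, \Sphere^n - \mathring{K})$ fails outright whenever $K$ has empty interior in $\Sphere^n$, which is the typical case here. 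As written, step (2) would lead the calculation to wrong answers.

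Separately, the ``discrepancy'' that drove you away from your initial reduced Alexander duality approach was not a real one. For nonempty $K$ with $\Sphere^n - K$ also nonempty, $U = \Sphere^n - K$ is a proper open subset of $\Sphere^n$, hence a noncompact $n$-manifold, so $b_n(U) = 0$; the term $-b_n(\Sphere^n - K)$ in the stated $b_0(K)$ formula therefore vanishes in the generic case, and is present only to make the formula hold uniformly including $K = \emptyset$ (where $U = \Sphere^n$ and $b_n(U) = 1$), exactly as you yourself observed at the end. So either the paper's route --- pair duality $b_j(\Sphere^n, K) = b_{n-j}(U)$ with the corrected exponent, fed into the long exact sequence --- or your own initial $\tilde{b}_i(K) = \tilde{b}_{n-i-1}(U)$ together with the short case analysis you sketch in your final paragraph, gives the four stated identities; the middle portion of your proposal, with the degree error and the excision step, is the part that does not work.
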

\begin{proof}
The Lefshetz duality theorem \cite[Theorem~19 on p.~297]{Spanier} gives for each $i, 0 \leq i \leq n$,
\[
b_i(\Sphere^n - K) = b_{n-i}(\Sphere^n, K).
\]
The theorem now follows from the long exact sequence of homology,
\[
\cdots \rightarrow \HH_i(K) \rightarrow \HH_i(\Sphere^n) \rightarrow 
\HH_i(\Sphere^n,K) \rightarrow \HH_{i-1}(K) \rightarrow \cdots
\] 
after noting that $\HH_i(\Sphere^n) = 0, i \neq 0,n$ and 
$\HH_0(\Sphere^n) = \HH_n(\Sphere^n) = \Z$.
\end{proof}

\section{Semi-algebraic Constructions}
\label{sec:saconstructions}
In this section we describe the semi-algebraic construction that lies
at the heart of the proof of our main theorem, and prove its important
properties. 

Let $S \subset \Sphere^k \times \Sphere^\ell$ 
be a subset 
defined by a first-order formula $\Phi(\X,\Y)$, 
and let  $\pi_\Y$ denote the projection along the $\Y$ co-ordinates.
Note that $\X$ and $\Y$ are the \emph{free} variables in $\Phi$, the formula being considered may have any number of quantified blocks of variables $(\ZB^1, \dots, \ZB^\omega).$ This will be addressed explicitly in Lemma~\ref{lem:main}.

We now define semi-algebraic sets having the same homotopy type
as the join space $J_{\pi_\Y}(S)$  in the case when $S$ is a closed 
(respectively open) semi-algebraic  subset of  $\Sphere^k \times \Sphere^\ell$.

\begin{notationdefinition}
Let $S \subset \Sphere^k \times \Sphere^\ell$,
$\Phi(\X,\Y)$, 
and $\pi_\Y$ be as above. If $S$ is \emph{closed}, 
we denote by $D_{\Y,c}^p(S)$  the semi-algebraic set 
defined by
\begin{eqnarray}
\label{eqn:definitionofjoin2}
D_{\Y,c}^p(S) &\defeq& \{ (u,\x,\y^0,\ldots,\y^p,\tb) \mid 
\x \in \Sphere^k, \tb \in \Delta^p, \nonumber
\\
&&
\text{ for each } i, \;0 \leq i \leq p, \y^i \in \Ball^{\ell+1},
(t_i = 0)  \vee \Phi(\x,\y^i),   \\
&&
u^2 + |\x|^2 + \sum_{i=0}^p |\y^i|^2
+ |\tb|^2 = p+4, \mbox{and } u \geq 0
\}.
\nonumber
\end{eqnarray}

Notice that $D_{\Y,c}^p(S)$
is a closed semi-algebraic subset of the upper hemisphere of the
sphere $\Sphere^N(0,p+4)$,  where $N = (k+1) + (p+1)(\ell+2).$

We will denote by $D_{\Y,c}^p(\Phi)$ the first-order formula 
defining the 
semi-algebraic set $D_{\Y,c}^p(S)$, namely

\begin{equation}
\label{def:defofDclosed}
D_{\Y,c}^p(\Phi)
\defeq
\Theta_1(T) \wedge  \Theta_2(\X,\Y^0,\ldots,\Y^p,\TB)
\wedge \Theta_3(U_0,\X,\Y^0,\ldots,\Y^p,\TB)
\end{equation}
where
\begin{eqnarray*}
\Theta_1
&\defeq& (\bigwedge_{i=0}^{p} T_i \geq 0) 
\wedge (\sum_{i=0}^{p} T_i = 1), \\
\Theta_2
&\defeq&
((|\X|^2 = 1) \bigwedge_{i = 0}^{p} ((|\Y^i|^2 \leq 1) \wedge ((T_i = 0) \vee \Phi(\X,\Y^i))),\\
\Theta_3
&\defeq& (U_0^2 + |\X|^2 + \sum_{i=0}^p |\Y^i|^2
+ |\TB|^2 = p+4)\wedge (U_0 \geq 0). \\
\end{eqnarray*}
\end{notationdefinition}

We have a similar construction in case $S$ is an open subset of
$\Sphere^k \times \Sphere^\ell$. In this case we thicken the various
faces of the standard simplex $\Delta^p$ 
(see Figure \ref{fig:thickened})
so that they become convex open 
subsets of $\R^{p+1}$, but maintaining the property that a subset of these 
thickened faces have a non-empty intersection if and only if the closures of
the corresponding faces in $\Delta^p$ had a non-empty intersection. 
In this way we ensure that our construction produces an open subset of a 
sphere, while having again the homotopy type of the join space. 

 \begin{figure}[hbt]
         \centerline{
           \scalebox{0.5}{
             \input{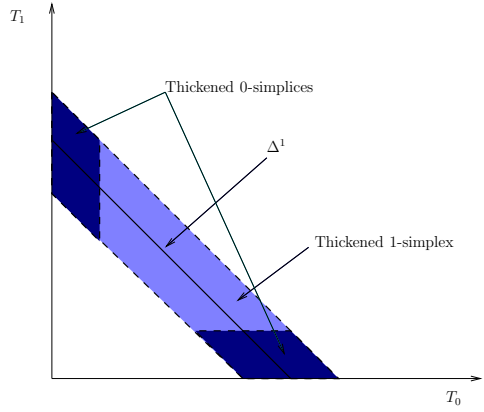}
             }
           }
         \caption{Thickening of the simplex $\Delta^1$.}
         \label{fig:thickened}
 \end{figure}

\begin{notationdefinition}
Let $S \subset \Sphere^k \times \Sphere^\ell$ 
be an open subset 
defined by the first-order formula $\Phi(\X,\Y)$, 
and let  $\pi_\Y$ denote the projection along the $\Y$ co-ordinates.

We will denote by $D_{\Y,o}^p(\Phi)$ the following first-order formula.

\begin{equation}
\label{def:defofDopen}
D_{\Y,o}^p(\Phi)
\defeq
\Theta_1(\TB) \wedge \Theta_2(\X,\Y^0,\ldots,\Y^p,\TB)
\wedge \Theta_3(U_0,\X,\Y^0,\ldots,\Y^p,\TB)
\end{equation}
where
\begin{eqnarray*}
\Theta_1
&\defeq& (\bigwedge_{i=0}^{p} T_i >  0) 
\wedge (1 - \frac{1}{2(p+1)} < \sum_{i=0}^{p} T_i  < 1 + \frac{1}{2(p+1)}), \\
\Theta_2
&\defeq&
 \bigwedge_{i = 0}^{p} ((|\Y^i|^2 < 3/2)\wedge ((T_i < \frac{1}{2(p+1)} 
\vee \Phi_{+}(\X,\Y^i))),\\
\Theta_3
&\defeq& (U_0^2 + |\X|^2 + \sum_{i=0}^p |\Y^i|^2
+ |\TB|^2 = 2p+4) \wedge (U_0 > 0) \\
\end{eqnarray*}
and
\[
\Phi_{+}(\X,\Y) \defeq (1/2 < |\X|^2 <  3/2) \wedge (1/2 < |\Y|^2 <  3/2)
\wedge \Phi(\X/|\X|,\Y/|\Y|).
\] 
We will denote by $D_{\Y,o}^p(S)$
the semi-algebraic set defined by  $D_{\Y,o}^p(\Phi)$.
Notice that $D_{\Y,o}^p(S)$ is an open subset of the upper hemisphere
of the sphere $\Sphere^N(0,2p+4)$, where
$N = (k+1)+(p+1)(\ell+2)$.
\end{notationdefinition}

We now prove some important properties of the sets 
$D_{\Y,c}^p(S),D_{\Y,o}^p(S)$ defined above as well as of the formulas
$D_{\Y,c}^p(\Phi),D_{\Y,o}^p(\Phi)$ defining them.

\begin{proposition}[Polynomial time computability]
\label{prop:polytime}
Let 
\[
(\Phi_n(X_0,\ldots,X_{k(n)},Y_0,\ldots,Y_{\ell(n)})_{n>0}, 
k,\ell = n^{O(1)})
\]
be a sequence of quantifier-free first order formulas such that
for each $n> 0$, 
$\Phi_n$ defines a closed (respectively open) semi-algebraic subset $S_n$  
of $\Sphere^{k(n)} \times \Sphere^{\ell(n)}$
and suppose that $(S_n)_{n > 0} \in \mathbf{P}_\R$.
 
Then the sequence 
$(D_{\Y,c}^p(\Phi_n))_{n >0}$
(respectively, $(D_{\Y,o}^p(\Phi_n))_{n >0}$) belongs to 
$\mathbf{P}_\R$ as well.
\end{proposition}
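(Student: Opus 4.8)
The statement is a "polynomial-time computability" claim, so the proof is really a matter of carefully inspecting the syntactic description of the formulas $D_{\Y,c}^p(\Phi_n)$ (respectively $D_{\Y,o}^p(\Phi_n)$) given in \eqref{def:defofDclosed} (respectively \eqref{def:defofDopen}) and checking that a real machine can test membership in the associated semi-algebraic set in time polynomial in $n$. First I would fix $n$ and recall that by hypothesis there is a real machine $M_n$ (uniform in $n$, running in time $n^{O(1)}$) that, on input a point of $\Sphere^{k(n)}\times\Sphere^{\ell(n)}$, decides membership in $S_n$, i.e. evaluates the quantifier-free formula $\Phi_n$. The point $D_{\Y,c}^p(\Phi_n)$ asks us to test membership of a tuple $(u,\x,\y^0,\ldots,\y^p,\tb)\in\R^N$ with $N=(k(n)+1)+(p+1)(\ell(n)+2)$; since $p$ is a fixed constant, $N=n^{O(1)}$, so merely reading the input takes polynomial time.

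The bulk of the argument is then a checklist over the three conjuncts $\Theta_1,\Theta_2,\Theta_3$. The formula $\Theta_1$ is a conjunction of $p+1$ sign conditions and one linear equation on the coordinates $\tb=(t_0,\ldots,t_p)$; since $p$ is constant this is decidable in constant time (a bounded number of arithmetic operations and comparisons). The formula $\Theta_3$ is a single polynomial equation of degree $2$ in the $N$ variables plus one sign condition $u\ge 0$ (resp. $u>0$); evaluating the quadratic form $u^2+|\x|^2+\sum_{i=0}^p|\y^i|^2+|\tb|^2$ and comparing with $p+4$ (resp. $2p+4$) costs $O(N)=n^{O(1)}$ operations. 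The interesting conjunct is $\Theta_2$: it is a conjunction over $i=0,\ldots,p$ (again a constant number of terms) of $|\Y^i|^2\le 1$ (resp. $<3/2$) together with the disjunction $(T_i=0)\vee\Phi_n(\X,\Y^i)$ (resp. $(T_i<\tfrac{1}{2(p+1)})\vee\Phi_{n,+}(\X,\Y^i)$). Each such clause requires: (i) computing $|\Y^i|^2$, which is $O(\ell(n))$ operations; (ii) the trivial test $T_i=0$; and (iii) in the branch where it is needed, one call to the machine $M_n$ on the point $(\X,\Y^i)$ — in the closed case directly, and in the open case on the normalized point $(\X/|\X|,\Y^i/|\Y^i|)$ after checking the extra sign conditions $1/2<|\X|^2<3/2$ and $1/2<|\Y^i|^2<3/2$, which is again $O(k(n)+\ell(n))$ extra operations for the normalization. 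Since there are at most $p+1$ such clauses and each costs at most $n^{O(1)}+ (\text{one run of }M_n)=n^{O(1)}$, the total running time for $\Theta_2$, and hence for the whole membership test, is polynomial in $n$.

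The only point requiring a word of care is uniformity: the family of machines must be a single real machine that receives $n$ (or an input of size encoding $n$) and simulates the appropriate instance; this is immediate because the descriptions of $\Theta_1,\Theta_2,\Theta_3$ depend on $n$ only through the polynomials $k(n),\ell(n)$ and through the uniform machine for $(S_n)_{n>0}$, and the constant $p$ is part of the (fixed) construction. I expect no real obstacle here — the construction of $D_{\Y,c}^p$ and $D_{\Y,o}^p$ was deliberately designed so that it introduces only a constant number $p+1$ of "copies" of the $\Y$-block, a constant number of auxiliary quadratic and linear constraints, and at most $p+1$ invocations of the original membership oracle, none of which blows up the complexity. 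The mild bookkeeping nuisance is the open case, where one must verify that the auxiliary conditions inside $\Phi_{n,+}$ guarantee that the arguments $\X/|\X|$ and $\Y^i/|\Y^i|$ are well-defined (the denominators are bounded away from $0$ by the constraints $1/2<|\X|^2$, $1/2<|\Y^i|^2$) so that the call to $M_n$ is legitimate; once that is observed, the complexity count is identical to the closed case.
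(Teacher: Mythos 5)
Your proof is correct and matches the paper's approach exactly: the paper's own proof is the single sentence ``Clear from the construction of the formulas,'' and you have simply supplied the syntactic bookkeeping the authors left implicit. One small point worth flagging: you treat $p$ as a fixed constant throughout (e.g.\ ``$\Theta_1$ \ldots is decidable in constant time''), which is what the statement suggests, but when the proposition is later invoked in the proof of Proposition~\ref{prop:main} it is applied with $p = m(n)$ a polynomial in $n$; your estimates still give an $n^{O(1)}$ bound in that regime since $N = (k(n)+1)+(p+1)(\ell(n)+2)$ and the number of clauses and oracle calls are all polynomial in $n$, but the ``constant time'' phrases should then be read as ``polynomial time.''
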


\begin{proof}
Clear from the construction of the formulas $(D_{\Y,c}^p(\Phi_n))_{n >0}$
(respectively $(D_{\Y,o}^p(\Phi_n))_{n >0}$).
\end{proof}

We now prove an important topological property of the semi-algebraic
sets $D_{\Y,c}^p(S),D_{\Y,o}^p(S)$ defined above.

\begin{proposition}[Homotopy equivalence to the join]
Let $S \subset \Sphere^k \times \Sphere^\ell$ 
be a closed  (respectively, open)  subset of $\Sphere^k \times \Sphere^\ell$ 
defined by a first-order formula $\Phi(\X,\Y)$, 
and let  $\pi_\Y$ denote the projection along the $\Y$ co-ordinates.
Then for all $p\geq 0$,
$J^p_{\pi_\Y} (S)$ is homotopy equivalent to  $D_{\Y,c}^p(S)$ 
(respectively, $D_{\Y,o}^p(S)$ ).
\end{proposition}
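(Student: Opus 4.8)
The plan is to exhibit, in each of the two cases, a deformation retraction (or a chain of homotopy equivalences) connecting $D_{\Y,\bullet}^p(S)$ to the topological join $J_{\pi_\Y}^p(S)$ of Definition~\ref{def:joinoveramap1}. First I would treat the \emph{closed} case. Observe that the formula $D_{\Y,c}^p(\Phi)$ describes exactly the set of tuples $(u,\x,\y^0,\ldots,\y^p,\tb)$ with $\x\in\Sphere^k$, $\tb\in\Delta^p$, $\y^i\in\Ball^{\ell+1}$ satisfying $t_i=0\vee\Phi(\x,\y^i)$, and the single equation $u^2+|\x|^2+\sum_i|\y^i|^2+|\tb|^2=p+4$ with $u\ge 0$, which merely places the configuration on the upper hemisphere of a sphere of the appropriate radius. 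The key point is that the coordinate $u$ is \emph{determined} by the others (it is the nonnegative square root of $p+4-|\x|^2-\sum|\y^i|^2-|\tb|^2$, which one checks is always nonnegative on the relevant locus since $|\x|^2=1$, $|\y^i|^2\le 1$, $|\tb|^2\le 1$), so $D_{\Y,c}^p(S)$ is the graph of a continuous function over its projection $\widetilde D$ to $(\x,\y^0,\ldots,\y^p,\tb)$-space, hence homeomorphic to $\widetilde D$. It therefore suffices to show $\widetilde D\simeq J_{\pi_\Y}^p(S)$.

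Next I would identify $\widetilde D$ as a "thickened" model of the join. Recall $J_{\pi_\Y}^p(S)=W_{\pi_\Y}^p(S)\times\Delta^p/\!\sim$, where $W_{\pi_\Y}^p(S)=\{(s_0,\ldots,s_p)\in S^{p+1}\mid \pi_\Y(s_0)=\cdots=\pi_\Y(s_p)\}$; writing $s_i=(\x,\y^i)$ this is the set of $(\x,\y^0,\ldots,\y^p)$ with $(\x,\y^i)\in S$ for all $i$, i.e. $\Phi(\x,\y^i)$ for all $i$, while the simplicial collapse $\sim$ forgets the $\y^i$-coordinate whenever $t_i=0$. In $\widetilde D$ we have relaxed the constraint $(\x,\y^i)\in S$ to: either $t_i=0$, in which case $\y^i$ is allowed to roam freely over the closed ball $\Ball^{\ell+1}$, or $t_i\ne 0$ and $\Phi(\x,\y^i)$ holds. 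The plan is to define a deformation that, as $t_i\to 0$, retracts the free $\y^i$-ball to a point (say the origin), realizing exactly the simplicial identification of the join; concretely one can scale $\y^i\mapsto \lambda(t_i)\y^i$ along a path, or better, use a straight-line homotopy in the $\y^i$ coordinate toward $0$ governed by a continuous cutoff supported near $t_i=0$, while simultaneously retracting the ball $\Ball^{\ell+1}$-valued coordinates that carry $t_i\ne 0$ onto the fiber $S_\x$ (this last retraction uses that $S_\x$ is closed, or in the open case open, in a collar). One must check this is well-defined on the quotient and continuous in all parameters jointly; I would package it as the statement that the obvious surjection $\widetilde D\to J_{\pi_\Y}^p(S)$, $(\,\x,\y^0,\ldots,\y^p,\tb)\mapsto [(\x,\y^0),\ldots,(\x,\y^p),\tb]$ (defined by first retracting each $\y^i$ with $t_i\ne 0$ into $S_\x$) is a homotopy equivalence, with homotopy inverse the inclusion of the genuine join as the locus where $\y^i\in S_\x$ for all $i$ with $t_i\ne 0$ and $\y^i=0$ otherwise.

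For the \emph{open} case the argument is structurally the same but the bookkeeping is more delicate, and I expect this to be the main obstacle. Here the standard simplex has been replaced by its thickening $\{\tb\mid t_i>0,\ 1-\tfrac{1}{2(p+1)}<\sum t_i<1+\tfrac{1}{2(p+1)}\}$, the condition "$t_i=0$" is softened to "$t_i<\tfrac{1}{2(p+1)}$", and $\Phi$ is replaced by the conical thickening $\Phi_+$ which declares $(\X,\Y)\in S$-ish whenever $1/2<|\X|^2,|\Y|^2<3/2$ and $\Phi(\X/|\X|,\Y/|\Y|)$ holds. The point of the thickening (as the text explains via Figure~\ref{fig:thickened}) is that a collection of thickened faces meets iff the corresponding closed faces of $\Delta^p$ meet, so the nerve is unchanged and the thickened gadget is homotopy equivalent to the honest simplicial join; one then radially retracts the annular $\Y^i$- and $\X$-coordinates onto the spheres $\Sphere^\ell$, $\Sphere^k$ (using $\Y\mapsto\Y/|\Y|$, which is a deformation retraction of the open annulus onto the sphere) to pass from $\Phi_+$ back to $\Phi$, and retracts the thickened simplex onto $\Delta^p$. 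The main care needed is to perform all these retractions \emph{compatibly}, so that the cutoff threshold $\tfrac{1}{2(p+1)}$ governing when $\Y^i$ is freed matches up with the vertex figures of the thickened simplex and nothing tears; I would invoke the general principle (cf. the infinitary construction in \cite{GVZ04}) that such a "fattened join over a map" deformation-retracts onto $J_{\pi_\Y}^p(S)$, and verify the hypotheses. As before the auxiliary coordinate $U_0$ is determined by the rest (its square being $2p+4$ minus a quantity one checks to be positive on the open locus), so the sphere equation contributes only a homeomorphism and may be discarded at the outset.
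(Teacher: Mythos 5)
The paper proves this by a different and shorter route: it defines the surjection $g\colon D_{\Y,c}^p(S)\to J^p_{\pi_\Y}(S)$, $(u,\x,\y^0,\ldots,\y^p,\tb)\mapsto[((\x,\y^0),\ldots,(\x,\y^p),\tb)]$, observes that the fiber of $g$ over each point is a product of closed balls (the free $\y^i$-coordinates for those $i$ with $t_i=0$), hence contractible, and then invokes the Vietoris--Begle mapping theorem. No section or retraction is constructed. Your reduction of the problem to the $u$-free model $\widetilde D$ via the graph argument is fine, and so is your identification of $\widetilde D$ as a ``thickened'' join; but the core of your proof --- the claim that the ``genuine join'' locus $L=\{\y^i\in S_\x\text{ if }t_i\neq 0,\ \y^i=0\text{ if }t_i=0\}$ is a (deformation) retract of $\widetilde D$ --- does not hold, and this is where the argument breaks.

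Concretely, the constraint $(t_i=0)\vee\Phi(\x,\y^i)$ forces $\y^i\in S_\x\subset\Sphere^\ell$ (so $|\y^i|=1$) for every $t_i>0$, no matter how small; only on the face $t_i=0$ is $\y^i$ free in $\Ball^{\ell+1}$. So the fiber of $\widetilde D$ over $\tb$ jumps discontinuously at $t_i=0$, and $L$ is not even closed in $\widetilde D$: a point $(\x,\ldots,\y^i,\ldots,\tb)$ with $t_i=0$ and $\y^i\in S_\x\subset\Sphere^\ell$ is a limit of points of $L$ with $t_i>0$, but lies outside $L$ since $\y^i\neq 0$. Consequently any homotopy $H$ with $H(\cdot,1)\subset L$ and $H|_L=\mathrm{id}$ would, by continuity along such a sequence, have to fix that limit point, contradicting $H(\cdot,1)\subset L$; so no strong deformation retraction onto $L$ exists. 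Your proposed cutoff $\y^i\mapsto\lambda(t_i)\y^i$ has the same problem: for any $t_i>0$ it would push $\y^i$ off $\Sphere^\ell$ and hence out of $\widetilde D$. The correct way out is exactly the paper's: do not try to embed $J^p_{\pi_\Y}(S)$ inside $D^p_{\Y,\ast}(S)$ (the quotient map $W^p_{\pi_\Y}(S)\times\Delta^p\to J^p_{\pi_\Y}(S)$ has in general no continuous section), but instead view $D^p_{\Y,\ast}(S)$ as a resolution of the join with contractible fibers and apply Vietoris--Begle, together with a Whitehead-type argument if one insists on genuine homotopy equivalence rather than homology isomorphism.
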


\begin{proof}
Suppose 
$S$ is a closed subset of $\Sphere^k \times \Sphere^\ell$ and let
\[
g: D_{\Y,c}^p(S) \rightarrow J^p_{\pi_\Y}(S)
\]
be the map which takes a point $(u,\x,\y^0,\ldots,\y^p,\tb) 
\in D_{\Y,c}^p(S)$
to the equivalence class represented by the point 
 $((\x,\y^0),\ldots,(\x,\y^p),\tb)$ in $ J_{\pi_\Y}^p(S)$.
From the definition of the spaces 
$D_{\Y,c}^p(S)$ and $J_{\pi_\Y}^p(S)$, we have that
the inverse image under $g$ of a point represented by 
 $((\x,\y^0),\ldots,(\x,\y^p),\tb)$ in $ J_{\pi_\Y}^p(S)$ is given by
$$
\displaylines{
g^{-1}(((\x,\y^0),\ldots,(\x,\y^p),\tb)) = 
\{ (u,\x,\z^0,\ldots,\z^p,\tb) \mid \mbox{ for each } i, 0 \leq i \leq p, \cr
\z^i \in \Ball^{\ell+1} \mbox{ and } \z^i = \y^i \mbox{ if } t_i \neq 0,
u^2 + |\x|^2 + \sum_{i=0}^p |\z^i|^2
+ |\tb^2| = p+4, u \geq 0
\}.
}
$$
It is easy to see from the above formula that 
the inverse image under $g$ of each point of $J^p_{\pi_\Y}(S)$ is homeomorphic
to a product of balls and hence contractible. 
The proposition now follows from the
Vietoris-Begle theorem. 

The open case is  proved analogously 
after an infinitesimal retraction 
reducing it to the closed case.
\end{proof}

As an immediate corollary we obtain
\begin{corollary}
\label{cor:joinhomotopy}
Let $S \subset \Sphere^k \times \Sphere^\ell$ 
be a closed  (respectively, open)  subset of $\Sphere^k \times \Sphere^\ell$ 
defined by a first-order formula $\Phi(\X,\Y)$, 
and let  $\pi_\Y$ denote the projection along the $\Y$ co-ordinates.
Then for all $p\geq 0$,
$D_{\Y,c}^p(S)$ (respectively, $D_{\Y,o}^p(S)$)  is 
$p$-equivalent to $\pi_\Y(S)$,
and
\[
b_i(D_{\Y,c}^p(S)) = b_i(\pi_\Y(S))
\]
(respectively,
$\displaystyle{b_i(D_{\Y,o}^p(S)) = b_i(\pi_\Y(S))}$
)
for $0 \leq i < p$. 
\end{corollary}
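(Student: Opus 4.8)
\textbf{Proof proposal for Corollary \ref{cor:joinhomotopy}.}
The plan is to chain together the two preceding results. First, by the Proposition on homotopy equivalence to the join, the space $D_{\Y,c}^p(S)$ (respectively $D_{\Y,o}^p(S)$) is homotopy equivalent to $J^p_{\pi_\Y}(S)$, so in particular $b_i(D_{\Y,c}^p(S)) = b_i(J^p_{\pi_\Y}(S))$ for all $i$, and likewise in the open case. Thus it suffices to understand the homology of $J^p_{\pi_\Y}(S)$ in degrees below $p$, and to see that the natural map realizes a $p$-equivalence to $\pi_\Y(S)$.

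Second, I would invoke Theorem \ref{the:compactcovering} with $f = \pi_\Y \colon S \to \Sphere^k$: it asserts that if $\pi_\Y$ covers semi-algebraic compacts, then $J(f)\colon J^p_{\pi_\Y}(S) \to \pi_\Y(S)$ is a $p$-equivalence, hence induces isomorphisms on $\HH_i$ for $0 \le i < p$ (and an epimorphism for $i = p$). In particular $b_i(J^p_{\pi_\Y}(S)) = b_i(\pi_\Y(S))$ for $0 \le i < p$. Combining this with the homotopy equivalence from the first step yields both assertions: that $D_{\Y,c}^p(S)$ (resp. $D_{\Y,o}^p(S)$) is $p$-equivalent to $\pi_\Y(S)$, and that the Betti numbers agree in degrees $0 \le i < p$.

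The one point that needs justification — and which I expect to be the only real content beyond quoting earlier results — is the hypothesis of Theorem \ref{the:compactcovering}, namely that the projection $\pi_\Y$ restricted to $S$ covers semi-algebraic compacts. This is exactly the situation flagged in Remark \ref{rem:restrictiontocompacts}: a projection covers semi-algebraic compacts when the source set $S$ is open or compact. In the closed case $S$ is a closed semi-algebraic subset of the compact set $\Sphere^k \times \Sphere^\ell$, hence compact, so given any $L \in \K(\pi_\Y(S))$ one may take $K = S \cap \pi_\Y^{-1}(L)$, which is closed and bounded, hence in $\K(S)$, with $\pi_\Y(K) = L$. In the open case one appeals to the corresponding statement for open $S$ (the same argument after an infinitesimal retraction, exactly as in the proof of the preceding Proposition, reduces the open situation to the closed one). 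With this hypothesis verified, the corollary follows immediately from the two cited results, so I would present it simply as the composition of the homotopy equivalence with the $p$-equivalence $J(f)$.
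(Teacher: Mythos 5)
Your proposal takes exactly the paper's route: compose the homotopy equivalence $D_{\Y,*}^p(S)\simeq J^p_{\pi_\Y}(S)$ from the preceding Proposition with the $p$-equivalence $J(f)\colon J^p_{\pi_\Y}(S)\to\pi_\Y(S)$ from Theorem~\ref{the:compactcovering}, the only thing to check being that $\pi_\Y|_S$ covers semi-algebraic compacts, which the paper (citing Remark~\ref{rem:restrictiontocompacts}) simply asserts follows from $S$ being open or closed in the compact $\Sphere^k\times\Sphere^\ell$. Your explicit verification in the closed case, taking $K=S\cap\pi_\Y^{-1}(L)$, is correct and adds detail the paper omits. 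One small misattribution: for open $S$ the covering-compacts property is a separate point-set fact (e.g.\ one produces a compact $K$ above a compact $L$ by uniformly bounding the distance to the complement of $S$ over $L$), not a consequence of the infinitesimal-retraction device, which in the preceding Proposition is used only to transfer the homotopy equivalence between $D_{\Y,o}^p(S)$ and the join from the closed case --- but since the paper is equally terse on this point, this does not amount to a gap in your argument relative to the paper's own standard.
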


\begin{proof}
Since $S$ is either an open or closed subset of 
$\Sphere^k \times \Sphere^\ell$ it is clear that
the projection map
$\pi_\Y$ covers semi-algebraic compacts. Now apply Theorem 
\ref{the:compactcovering}. 
\end{proof}

We now show how the formulas $D_{\Y,c}^p(\Phi)$ and 
$D_{\Y,o}^p(\Phi)$ can be rewritten when the formula $\Phi$ 
involves quantified blocks of variables.

\begin{lemma}
\label{lem:main}
Suppose the first-order formula $\Phi(\X,\Y)$ is of the form
\[
\Phi\defeq (Q_1 \ZB^1 \in \Sphere^{k_1})( Q_2 \ZB^2 \in \Sphere^{k_2}) \ldots
(Q_\omega \ZB^\omega \in \Sphere^{k_\omega}) \Psi(\X,\Y,\ZB^1,\ldots,\ZB^\omega)
\]
with $Q_i \in \{\exists,\forall\}$, and $\Psi$ a quantifier-free
first order formula.
Let $\pi_\Y$ denote the projection  along the $Y$ coordinates. 
Then, for each $p \geq 0$ the formula 
\[
D_{\Y,*}^p(\Phi)(\X,\Y^0,\ldots,\Y^p,\TB)
\] 
(where $*$ denotes either $c$ or $o$)
is equivalent to the formula
$$
\displaylines{
\bar{D}_{\Y,*}^p(\Phi) \defeq
(Q_1 \ZB^{1,0} \in \Sphere^{k_1},\ldots, Q_1 \ZB^{1,p} \in \Sphere^{k_1}) \cr
(Q_2 \ZB^{2,0} \in \Sphere^{k_2},\ldots, Q_2 \ZB^{2,p} \in \Sphere^{k_2}) \cr
\vdots \cr
(Q_\omega \ZB^{\omega,0} \in \Sphere^{k_\omega},\ldots, Q_\omega \ZB^{\omega,p} \in \Sphere^{k_\omega}) \cr
(D_{\Y,*}^p(\Psi)(\X,\Y^0,\ldots,\Y^p,\ZB^{1,0},\ldots,\ZB^{\omega,p},T_0,\ldots,T_p)),
}
$$
where $\Y^i = (Y^i_0,\ldots,Y^i_\ell)$ and
$\ZB^{j,i} = (Z^{j,i}_0,\ldots,Z^{j,i}_{k_j})$
for $0 \leq i \leq p, 1 \leq j \leq \omega$, and
$\pi_{\Y}$ is the projection along the $\Y$ co-ordinates.
\end{lemma}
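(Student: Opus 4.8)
The plan is to unwind the definitions of $D_{\Y,c}^p(\Phi)$ and $D_{\Y,o}^p(\Phi)$ given in \eqref{def:defofDclosed} and \eqref{def:defofDopen}, and observe that the only place where $\Phi$ itself enters the formula is through the conjuncts $(T_i = 0) \vee \Phi(\X,\Y^i)$ (resp. $(T_i < \tfrac{1}{2(p+1)}) \vee \Phi_+(\X,\Y^i)$), for $i = 0,\ldots,p$. Each such occurrence is a single substitution instance of $\Phi$ with the free variables $\X,\Y$ replaced by $\X,\Y^i$. Since $\Phi$ is the quantified formula $(Q_1 \ZB^1)\cdots(Q_\omega \ZB^\omega)\Psi(\X,\Y,\ZB^1,\ldots,\ZB^\omega)$, the $i$-th substitution instance is $(Q_1 \ZB^{1,i} \in \Sphere^{k_1})\cdots(Q_\omega \ZB^{\omega,i} \in \Sphere^{k_\omega})\,\Psi(\X,\Y^i,\ZB^{1,i},\ldots,\ZB^{\omega,i})$, where the bound variables of the $i$-th copy have been renamed apart (to $\ZB^{j,i}$) so that distinct copies share no bound variables and no bound variable clashes with the free variables $\X,\Y^0,\ldots,\Y^p,\TB$.

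First I would make precise the renaming-and-substitution step: for each fixed $i$, replacing $\Phi(\X,\Y^i)$ by its prenex form with freshly renamed bound variables yields a logically equivalent formula, since bound-variable renaming ($\alpha$-conversion) and substitution into a formula with no capture both preserve the defined set. Next I would push all the quantifier blocks to the front. The key observation is that the conjuncts $\Theta_1$ and $\Theta_3$ contain none of the $\ZB^{j,i}$, and within $\Theta_2$ the $i$-th conjunct contains only the block $\ZB^{\bullet,i}$; so for a fixed level $j$, the quantifiers $Q_j \ZB^{j,0},\ldots,Q_j \ZB^{j,p}$ all carry the \emph{same} quantifier symbol $Q_j$, and quantifiers of the same kind over disjoint variable blocks commute freely with one another and can be pulled outward past any subformula not mentioning those variables. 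Carrying this out level by level — first the $p+1$ copies of $Q_1$, then the $p+1$ copies of $Q_2$, and so on — collects the quantifiers into exactly the block structure displayed in $\bar D_{\Y,*}^p(\Phi)$, leaving behind the quantifier-free matrix, which is precisely $D_{\Y,*}^p(\Psi)(\X,\Y^0,\ldots,\Y^p,\ZB^{1,0},\ldots,\ZB^{\omega,p},T_0,\ldots,T_p)$: indeed $D_{\Y,*}^p(\Psi)$ is obtained from the same template \eqref{def:defofDclosed} (resp. \eqref{def:defofDopen}) by substituting the quantifier-free $\Psi(\X,\Y^i,\ZB^{1,i},\ldots,\ZB^{\omega,i})$ for the $i$-th occurrence.

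In the open case one must be slightly careful because the disjuncts involve $\Phi_+$ rather than $\Phi$, but $\Phi_+(\X,\Y) = (1/2 < |\X|^2 < 3/2)\wedge(1/2 < |\Y|^2 < 3/2)\wedge \Phi(\X/|\X|,\Y/|\Y|)$, and the first two conjuncts are quantifier-free and involve no $\ZB$ variables, so they may be left in the matrix while only the $\Phi(\X/|\X|,\Y^i/|\Y^i|)$ part carries the quantifier prefix out; the normalization $\X/|\X|,\Y^i/|\Y^i|$ is just a term substitution into $\Psi$ that again causes no variable capture. The verification is thus entirely syntactic, a routine application of the standard prenex-normalization rules together with $\alpha$-renaming.

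The one point that deserves genuine attention — and is the main (if modest) obstacle — is checking that no variable capture occurs when the quantifiers are pulled to the front: one must confirm that the freshly introduced $\ZB^{j,i}$ do not appear in $\Theta_1$ or $\Theta_3$, do not appear among the free variables $\X,\Y^0,\ldots,\Y^p,\TB,U_0$, and, within $\Theta_2$, appear only inside the $i$-th conjunct, so that moving $Q_j\ZB^{j,i}$ outward past the other conjuncts is sound. This is immediate from the explicit forms in \eqref{def:defofDclosed} and \eqref{def:defofDopen}, but it is exactly the hypothesis that makes the prenexing step legitimate, so I would state it explicitly before concluding that $D_{\Y,*}^p(\Phi)$ and $\bar D_{\Y,*}^p(\Phi)$ define the same semi-algebraic set.
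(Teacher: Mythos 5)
Your proof is correct and takes the same route as the paper's, which simply observes that the innermost quantifiers can be pulled outside at the cost of introducing $(p+1)$ copies of the bound variable blocks; you flesh out that one-line observation into a careful prenexing argument, checking $\alpha$-renaming, absence of variable capture, the commutativity of same-type quantifiers over disjoint blocks, and the $\Phi_+$ normalization in the open case, all of which is in the intended spirit.
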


\begin{proof}
It follows from the structure of the formula 
$D_{\Y,*}^p(\Phi)(\X,\Y^0,\ldots,\Y^p,\TB)$ that the
inner most quantifiers can be pulled outside at the cost of introducing
$(p+1)$ copies of the quantified  variables.
\end{proof}

\section{Proof of the main theorems}
\label{sec:proof}

\begin{notation}
Let $\Phi(\X)$ be a first-order formula with free variables 
$\X=(X_0, \dots, X_k)$. We let $\RR(\Phi(\X))$ denote the \emph{realization} of the formula $\Phi$,
$$ \RR(\Phi(\X))=\{\x \in \R^{k+1} \mid \Phi(\x)\}.$$
\end{notation}

We are now in a position to prove Theorem \ref{the:main}.
The proof depends on the following key proposition. 
(Note that we are going use Proposition \ref{prop:main}  
in the special case when the set of variables $Y$ is empty).

\begin{proposition}
\label{prop:main}
Let $m(n), k(n), k_1(n),\ldots,k_\omega(n)$ be polynomials,  
and let 
\[
(\Phi_n(\X,\Y))_{n>0}
\] 
be a sequence of formulas 
$$ \Phi_n(\X,\Y) \defeq  
(Q_1 \ZB^1 \in \Sphere^{k_1})  \cdots 
(Q_\omega \ZB^\omega\in \Sphere^{k_\omega})
\phi_n(\X,\Y,\ZB^{1},\ldots,\ZB^{\omega}),$$
having free variables $(\X,\Y) = (X_0,\ldots,X_{k(n)},Y_0,\ldots,Y_{m(n)})$,
with 
\[
Q_1,\ldots,Q_\omega \in \{\exists,\forall\}, Q_i \neq Q_{i+1},
\]
and $\phi_n$ a  quantifier-free formula
defining a closed (respectively open) semi-algebraic subset of 
$$
\Sphere^k \times \Sphere^m \times \Sphere^{k_1} \times \cdots \times
\Sphere^{k_\omega}.
$$
Suppose that the sequence 
$\left(\RR(\phi_n(\X,\Y,\ZB^{1},\ldots,\ZB^{\omega})\right)_{n > 0} \in
\mathbf{P}_\R$.

Then,  
\begin{enumerate}
\item
there exists $N = N(m)$ a fixed polynomial in $m$,
variables $\V=(V_0,\ldots,V_{N})$
and a sequence of
quantifier-free first order formulas 
\[
(\Theta_n(\X,\V))_{n>0}
\]
such that for each $\x \in \Sphere^{k(n)}$,
$\Theta_n(\x,\V)$ describes a closed (respectively open)
semi-algebraic subset $T_n$ of  $\Sphere^N$
and the sequence
$(T_n)_{n > 0} \in \mathbf{P}_\R$;
\item
there exists a sequence of polynomial-time computable maps
\[
\left(F_n: \Z[T]_{\leq N}\rightarrow \Z[T]_{\leq m}\right)_{n > 0}
\]
(where for any $p \geq 0$, 
$\Z[T]_{\leq p}$ denotes polynomials of degree at most $p$) 
such that the Poincar\'e polynomials of the fibers over $\x$ verify
\[
P_{\RR(\Phi_n(\x,\Y))} =  F_n(P_{\RR(\Theta_n(\x,\V))}).
\]
\end{enumerate}
\end{proposition}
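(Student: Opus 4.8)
The plan is to prove Proposition~\ref{prop:main} by induction on the number $\omega$ of quantifier alternations, using the semi-algebraic construction $D_{\Y,*}^p$ as the engine that eliminates the \emph{innermost} quantifier block at each step. First I would set up the base case $\omega = 0$: then $\Phi_n$ is already quantifier-free, $\RR(\Phi_n(\x,\Y))$ is a closed (resp.\ open) semi-algebraic subset of $\Sphere^m$, so we may simply take $T_n = \RR(\Phi_n)$ itself with $N = m$, $\Theta_n = \Phi_n$ and $F_n = \mathrm{id}$. For the inductive step, suppose $\Phi_n = (Q_1 \ZB^1) \Phi_n'$ where $\Phi_n'$ has $\omega - 1$ alternations. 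The key observation is that $\RR(\Phi_n(\x,\Y)) = \pi_{\ZB^1}(\RR(\Phi_n'(\x,\Y,\ZB^1)))$ when $Q_1 = \exists$; when $Q_1 = \forall$ we dualize via De~Morgan (replacing $S$ by the complement within the ambient sphere product and using Alexander--Lefshetz duality, Theorem~\ref{the:alexanderduality}, to translate Betti numbers of complements into Betti numbers, while also swapping the open/closed roles so that the hypotheses of Corollary~\ref{cor:joinhomotopy} remain available). So it suffices to handle the existential case and then wrap a duality layer around it.

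The heart of the argument is to apply the construction $D_{\ZB^1,*}^p$ with $p = m + 1$ (or just large enough that $p > \deg P_{\RR(\Phi_n)}$, which is bounded by $m$ since the realization sits inside $\Sphere^m$). By Corollary~\ref{cor:joinhomotopy}, the set $D_{\ZB^1,*}^{m+1}(\RR(\Phi_n'))$ has $b_i = b_i(\pi_{\ZB^1}(\RR(\Phi_n'))) = b_i(\RR(\Phi_n))$ for $0 \le i < m+1$, hence its Poincar\'e polynomial truncated to degree $m$ (i.e.\ modulo $T^{m+1}$) \emph{equals} $P_{\RR(\Phi_n)}$ exactly. This is where the switch from the Euler characteristic to the full Poincar\'e polynomial pays off: we recover exact Betti numbers, not just a bound. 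Crucially, by Lemma~\ref{lem:main} the formula $D_{\ZB^1,*}^{m+1}(\Phi_n')$ rewrites as $\bar D_{\ZB^1,*}^{m+1}(\Phi_n')$, which has the \emph{same} number $\omega - 1$ of quantifier alternations as $\Phi_n'$ (the inner blocks get $(p+1)$-fold duplicated but their quantifier \emph{pattern} $Q_2, \dots, Q_\omega$ is unchanged), while its quantifier-free matrix $D_{\ZB^1,*}^{m+1}(\Psi)$ still defines a closed (resp.\ open) set and, by Proposition~\ref{prop:polytime}, still lies in $\mathbf{P}_\R$. The key point is that the construction is \emph{uniform in the free variables} $\X$: for each fixed $\x$, $D_{\ZB^1,*}^{m+1}(\Phi_n')(\x,\cdot)$ is exactly $D_{\ZB^1,*}^{m+1}$ applied to the fiber $\RR(\Phi_n'(\x,\cdot))$, so the fiberwise Betti-number identity holds for every $\x$ simultaneously. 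Now apply the induction hypothesis to the sequence $(\bar D_{\ZB^1,*}^{m+1}(\Phi_n'))_{n>0}$ (viewed as formulas with free variables $\X$ and auxiliary "$\Y$"-variables $(\Y^0,\dots,\Y^{m+1},\TB)$, of dimension $\tilde m = (m+1)(\ell+2) + \dots$ a fixed polynomial in $m$): this yields the quantifier-free $\Theta_n$, the ambient dimension $N = N(\tilde m) = N(m)$ still polynomial in $m$, and a polynomial-time map $\tilde F_n$ with $P_{\RR(\bar D(\Phi_n')(\x,\cdot))} = \tilde F_n(P_{\RR(\Theta_n(\x,\cdot))})$. Composing with the truncation-to-degree-$m$ map (a fixed polynomial-time $\Z[T]_{\le \tilde m} \to \Z[T]_{\le m}$) gives $F_n$ in the existential case; in the universal case we further compose with the Alexander--Lefshetz transformation, which is an explicit, polynomial-time rational operation on Poincar\'e polynomials as recorded in Theorem~\ref{the:alexanderduality}.

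The main obstacle I anticipate is bookkeeping rather than conceptual: one must verify carefully that the dimension $N$ and the depth of the recursion stay controlled --- specifically that applying $D^{m+1}$ at the $j$-th step only blows up the number of \emph{variables} by a polynomial factor depending on $m$ (not on $n$), and leaves the alternation count monotonically decreasing by one each step, so that after $\omega$ steps (a \emph{constant}) everything is quantifier-free and the total variable count and formula size remain $n^{O(1)}$. A second delicate point is ensuring the closed/open dichotomy is respected through the De~Morgan dualization: taking complements inside $\Sphere^k \times \Sphere^m \times \prod \Sphere^{k_j}$ turns a closed set into an open one, so the two constructions $D_{\Y,c}$ and $D_{\Y,o}$ must be interleaved according to the parity of how many $\forall$'s have been processed --- but since both are covered by Corollary~\ref{cor:joinhomotopy} and both preserve membership in $\mathbf{P}_\R$ (Proposition~\ref{prop:polytime}), this is manageable. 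Finally one should check that the map $F_n$ really is computable by a Blum--Shub--Smale machine in time polynomial in $n$: this is immediate since $F_n$ is a composition of a constant number ($O(\omega)$) of fixed rational transformations of polynomials of degree $\le N = n^{O(1)}$, each of which is a straight-line computation of polynomial length.
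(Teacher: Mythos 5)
Your proposal follows essentially the same route as the paper: induction on $\omega$, peeling the \emph{outermost} quantifier block $Q_1\ZB^1$ (your word ``innermost'' is a slip), handling $\exists$ by the $D^p$ construction plus Corollary~\ref{cor:joinhomotopy} and Lemma~\ref{lem:main}, handling $\forall$ by De~Morgan plus the open construction plus Alexander--Lefshetz duality, and composing with a truncation operator. The only substantive divergence is your choice $p=m+1$ rather than the paper's $p=m$; that is in fact the more careful choice, since Corollary~\ref{cor:joinhomotopy} with $p=m$ literally gives the Betti-number identity only for $0\le i<m$, whereas the proof needs it up to $i=m$, so your version tidies up a small off-by-one in the published argument.
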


\begin{proof}
The proof is by an induction on $\omega$.
We assume that the formula $\phi_n$ defines a closed semi-algebraic set. 
The open case can be handled analogously.
 
If $\omega=0$ then we let for each $n >0$, 
$N = m(n)$, $\Theta_n = \Phi_n(\X,\V)$, 
and 
$F_n$ to be the identity map.
Since $\Phi_n$ is quantifier-free in this case, so is $\Theta_n$,
and we have trivially 
\[
P_{\RR(\Phi_n(\x,\Y))} =  F_n(P_{\RR(\Theta_n(\x,\V))}).
\]

If $\omega > 0$, we have the following two cases.
\begin{enumerate}
\item
Case 1, $Q_1 = \exists$: 
In this case consider the
sequence of formulas 
$\Phi'_n \defeq \bar{D}_{\pi_{\ZB^1},c}^m(\Psi_n)$ (cf. Lemma~\ref{lem:main}), 
where $\Psi_n$ is the following formula with free variables $\Y,\ZB^1$
\begin{equation}\label{eq:Psin}
  \Psi_n(\X,\Y,\ZB^1) \defeq  (Q_2 \ZB^2 \in \Sphere^{k_2})  \cdots 
(Q_\omega \ZB^\omega\in \Sphere^{k_\omega})
\phi_n(\X,\Y,\ZB^{1},\ldots,\ZB^{\omega}).
\end{equation}

The formula $\bar{D}_{{\ZB^1},c}^m(\Psi_n)$ is given by
$$
\displaylines{
\bar{D}_{\pi_{\ZB^1},c}^m(\Psi_n) \defeq
(Q_2 \ZB^{2,0} \in \Sphere^{k_2},\ldots, Q_2 \ZB^{2,m} \in \Sphere^{k_2}) \cr
(Q_3 \ZB^{3,0} \in \Sphere^{k_3},\ldots, Q_3 \ZB^{3,m} \in \Sphere^{k_3}) \cr
\vdots \cr
(Q_\omega \ZB^{\omega,0} \in \Sphere^{k_\omega},\ldots, Q_\omega \ZB^{\omega,m} \in \Sphere^{k_\omega}) \cr
(D_{{\ZB^1},c}^m(\phi_n)(\X,\Y,\ZB^{1,0},\ldots,\ZB^{1,m},
\ZB^{2,0},\ldots,\ZB^{\omega,m},T_0,\ldots,T_m)).
}
$$
Note that the quantifier-free inner formula in the above expression, 
\[
D_{{\ZB^1},c}^m(\phi_n)(\X,\Y,\ZB^{1,0},\ldots,\ZB^{1,m},\ZB^{2,0},\ldots,\ZB^{\omega,m},T_0,\ldots,T_m)
\]
has 
\[
N = \sum_{j=1}^{\omega} (k_j+1)(m+1) + 2(m+1) = n^{O(1)}
\]
free variables, and it is clear 
that the sequence $(\RR(D_{{\ZB^1},c}^m(\phi_n)))_{n > 0} \in \mathbf{P}_\R$
(by Proposition \ref{prop:polytime}).

Moreover, 
the formula $\bar{D}_{\pi_{\ZB^1},c}^m(\Psi_n)$
has one less quantifier alternation than the formula $\Phi_n$.

We now apply the proposition inductively to obtain
a sequence
$(\Theta_n)_{n > 0}$ with $(\RR(\Theta_n))_{n > 0} \in \mathbf{P}_\R$, 
and polynomial-time computable maps
$(G_n)_{n > 0}$.
By inductive hypothesis we can suppose that
for each $i, 0 \leq i \leq m$ we have for each $\x \in \Sphere^{k(n)}$ 
\[ 
P_{\RR(\bar{D}_{{\ZB^1},c}^m(\Psi_n(\x,\Y,\ZB^1)))} = 
G_{n}(P_{\RR(\Theta_n(\x,\cdot))}).
\]

But by Corollary \ref{cor:joinhomotopy}, we have that for $0 \leq i \leq m$,
\[
b_i(\RR(\Phi_n(\x,\Y))) = b_i(\pi_{\ZB^1}(\RR(\Psi_n(\x,\Y,\ZB^1)))) = 
b_i(\RR(\bar{D}_{{\ZB^1},c}^m(\Psi_n(\x,\Y,\ZB^1)))). 
\]
We set 

\[
F_{n}= \mathrm{Trunc}_m \circ G_{n},
\]
where $\mathrm{Trunc}_m:\Z[T] \rightarrow \Z[T]_{\leq m}$ 
denotes the operator that truncates polynomials to ones of
degree at most $m$; in other words $\mathrm{Trunc}_m$ is defined by
\[
\mathrm{Trunc}_m(\sum_{0 \leq i} a_iT^i) \defeq  
\sum_{0 \leq i \leq m} a_i T^i.
\]
This
completes the induction in this case.

\item
Case 2, $Q_1 = \forall$: 
In this case consider the
sequence of formulas 
$\Phi'_n \defeq \bar{D}_{{\ZB^1},o}^m(\neg\Psi_n)$ (cf. Lemma \ref{lem:main}),
where $\Psi_n$ is 
defined as in the previous case (Eqn~\eqref{eq:Psin}).

We now apply the proposition inductively as above to obtain
a sequence
$(\Theta_n)_{n > 0}$ with $(\RR(\Theta_n))_{n > 0} \in \mathbf{P}_\R$, 
and polynomial-time computable maps
$(G_n)_{n >0}$.
By inductive hypothesis we can suppose that
for each $\x \in \Sphere^n$ 
we have 

\[ 
P_{\RR(\bar{D}_{{\ZB^1},o}^m(\neg \Psi_n(\x,\Y,\ZB^1)))} = 
G_{n}(P_{\RR(\Theta_n(\x,\cdot))}).
\]

By Corollary \ref{cor:joinhomotopy}, we have  for $0 \leq i \leq m$,
\[
b_i(\Sphere^m \setminus \RR(\Phi_n(\x,\Y)) = 
b_i(\pi_{\ZB^1}(\RR(\neg \Psi_n(\x,\Y,\ZB^1))) = 
b_i(\RR(\bar{D}_{{\ZB^1},o}^m(\neg \Psi_n(\x,\Y,\ZB^1))).
\]
The set $K = \RR(\Phi_n(\x,\Y))$ is a semi-algebraic compact, so by
Alexander-Lefshetz duality (Theorem~\ref{the:alexanderduality}), we have
\begin{eqnarray*}
b_0(K) &=& 1 + b_{m-1}(\Sphere^m - K) - b_m(\Sphere^m - K),\\
b_i(K) &=& b_{m-i-1}(\Sphere^m - K), \;\;1 \leq i \leq m-2, \\
b_{m-1}(K) &=& b_0(\Sphere^m - K) - 1 + 
\max(1 - b_0(\Sphere^n - K),0),\\
b_{m}(K) &=& 1 - \min(1,b_0(\Sphere^m - K)).
\end{eqnarray*}

We set 
$F_n$ defined by
\begin{eqnarray*} 
F_{n,0}(P) &=& 1 + G_{n,m-1}(P)- G_{n,m}(P),\\
F_{n,i}(P) &=& G_{n,m-i-1}(P), \;\;1 \leq i \leq m-2, \\
F_{n,m-1}(P) &=& G_{n,0}(P) - 1 + 
\max(1 - G_{n,0}(P),0),\\
F_{n,m}(P) &=& 1 - \min(1,G_{n,0}(P))
\end{eqnarray*}
where we denote by $F_{n,i}(P)$ (respectively $G_{n,i}(P)$) the coefficient
of $T^i$ in  $F_n(P)$  (respectively $G_{n}(P)$).
This completes the induction in this case as well.
\end{enumerate}
\end{proof}

\begin{proof}[Proof of Theorem \ref{the:main}] 
Follows immediately from Proposition \ref{prop:main} in the special case
when the set of variables $\Y$ is empty. In this case the 
sequence of formulas $(\Phi_n)_{n>0}$ 
correspond to a language in the polynomial
hierarchy and for each $n$, $\x = (x_0,\ldots,x_{k(n)}) \in S_n 
\subset\Sphere^{k(n)}$ 
if and only if
\[
F_{n}(P_{\RR(\Theta_n(\x,\cdot))})(0) >  0
\]
and 
this last condition can be checked in polynomial time with advice from the
class $\#\mathbf{P}^\dagger_R$.
\end{proof}

\begin{remark}
It is interesting to observe that
in complete analogy with the proof of the classical Toda's 
theorem  the proof of Theorem \ref{the:main} also 
requires just one call to the oracle at the end.
\end{remark}

\begin{proof}[Proof of Theorem \ref{the:main2}]
Follows from the proof of 
Proposition \ref{prop:main} since the 
the formula
$\Theta_n$ is clearly computable in polynomial time from the given formula
$\Phi_n$ as long as the number of quantifier alternations $\omega$ is 
bounded by a constant.
\end{proof}

\bibliographystyle{amsplain}
\bibliography{master}
\end{document}